\theoremstyle{definition}
\newtheorem{defn}{Definition}
\theoremstyle{proposition}
\newtheorem{prop}{Proposition}
\theoremstyle{lemma}
\newtheorem{lem}{Lemma}
\begin{document}
\SetBgContents{}
\setlength{\belowcaptionskip}{0pt}
\setlength{\textfloatsep}{15pt}
\title{Mobility-Aware Caching in D2D Networks}
\author{\IEEEauthorblockN{Rui Wang, Jun Zhang, S.H. Song and Khaled B. Letaief, \emph{Fellow, IEEE} }
\thanks{Rui Wang, Jun Zhang, S.H. Song and Khaled B. Letaief are with the Department of Electronic and Computer Engineering,
the Hong Kong University of Science and Technology, Clear Water Bay,
Kowloon, Hong Kong (e-mail: \{rwangae, eejzhang, eeshsong, eekhaled\}@ust.hk).
Khaled B. Letaief is also with Hamad bin Khalifa University, Doha,
Qatar (e-mail: kletaief@hkbu.edu.qa).}
\thanks{This work was supported by the Hong Kong Research Grant Council under Grant No. 610113.}}
\maketitle
\begin{abstract}
Caching at mobile devices can facilitate device-to-device (D2D) communications, which may significantly improve spectrum efficiency and alleviate the heavy burden on backhaul links. However, most previous works ignored user mobility, thus having limited practical applications. In this paper, we take advantage of the user mobility pattern by the inter-contact times between different users, and propose a mobility-aware caching placement strategy to maximize the \emph{data offloading ratio}, which is defined as the percentage of the requested data that can be delivered via D2D links rather than through base stations (BSs). Given the NP-hard caching placement problem, we first propose an optimal dynamic programming (DP) algorithm to obtain a performance benchmark, but with much lower complexity than exhaustive search. We will then prove that the problem falls in the category of monotone submodular maximization over a matroid constraint, and propose a time-efficient greedy algorithm, which achieves an approximation ratio as $\frac{1}{2}$. Simulation results with real-life data-sets will validate the effectiveness of our proposed mobility-aware caching placement strategy. We  observe that users moving at either a very low or very high speed should cache the most popular files, while users moving at a medium speed should cache less popular files to avoid duplication.
  %the effect of human mobility was illustrated via simulations previously, there is no work take advantage of human mobility behaviour in the D2D caching placement strategy.
\end{abstract}
\begin{IEEEkeywords}
Caching, device-to-device communications, human mobility, matroid constraint, submodular function.
\end{IEEEkeywords}
\IEEEpeerreviewmaketitle

\newpage

\section{Introduction}

With the popularity of smart phones, the data traffic generated by mobile applications, e.g., multimedia file sharing and video streaming, is undergoing an exponential growth, and will soon reach the capacity limit of current cellular networks. To meet the heavy demand, network densification is commonly adopted to achieve higher network capacity, which is expected to increase by 1000 times in future 5G networks \cite{5G,shi2014group,magaziney}. However, the increase of access points and user traffic put a heavy burden on backhaul links, which connect base stations (BSs) with the core network \cite{backhaul}. To reduce the burden, one promising approach is to cache popular contents at BSs and user devices \cite{femtocachingmaga,wang2014cache,cachebenefit,d2d-cache,cachsize,maddah2015decentralized}, so that mobile users can get the required content from local BSs or close user devices without utilizing backhaul links. Such local access can also reduce the download delay and improve the energy efficiency \cite{BScache,wang2014cache}.

Most previous investigations on wireless caching networks assumed fixed network topologies \cite{femtocaching2,femtocachinginfocom,femtocoded,femtocachingmag,d2d-cache,scaling,tradeoff,tradeoffj,fundamentallimits}. However, user mobility is an intrinsic feature of wireless networks, which changes the network topologies over time. Thus, it is critical to take the user mobility pattern into account. On the other hand, user mobility can also be a useful feature to exploit, as it will increase the communication opportunities of moving users. Mobility-aware design has been proved to be an effective approach to deal with lots of problems in wireless networks. For example, exploiting user mobility helps improve capacity in ad hoc networks \cite{mobilitycapacity} and reduce the probability of failed file delivery in femto-caching networks \cite{femtomobility}. In this paper, we will propose an effective mobility-aware caching strategy in device-to-device (D2D) caching networks to offload traffic from the cellular network.

\subsection{Related Works}

Compared with caching at BSs \cite{femtomobility,femtocaching2,femtocachinginfocom,BScache,femtocoded,jcache,li2015distributed,liu2015exploiting}, caching at mobile devices provides new features and unique advantages. First, the aggregate caching capacity grows with the number of devices in the network, and thus the benefit of device caching improves as the network size increases \cite{d2d-cache}. Second, by caching popular content at devices, mobile users may acquire required files from close user devices via D2D communications, rather than through the BS \cite{D2D,blue}. This will significantly reduce the mobile traffic on the backbone network and alleviate the heavy burden on the backhaul links. However, mobile caching also faces some new challenges. Specifically, not only the users, but also the caching helpers are moving over time, which brings additional difficulties in the caching design.
%D2D communications is one of the emerging techniques to meet the increasing demand in the future 5G networks, which can offload data tranfic from the cellular network and provide high spectral efficiency \cite{design,d2d-5g}.

There have been lots of efforts on D2D caching networks while assuming fixed network topologies. Considering a single cell scenario, it was shown in \cite{d2d-cache} that D2D caching outperforms other schemes, including the conventional unicasting scheme, the harmonic broadcasting scheme, and the coded multicast scheme. Assuming that each device can store one file, Golrezaei \emph{et al.} analyzed the scaling behavior of the number of active links in a D2D caching network as the number of mobile devices increases \cite{scaling}. It was found that the concentration of the file request distribution affects the scaling laws, and three concentration regimes were identified. In \cite{tradeoff,tradeoffj}, the outage-throughput tradeoff in D2D caching networks was investigated and optimal scaling laws of per-user throughput were derived as the numbers of mobile devices and files in the library grow under a simple uncoded protocol. Meanwhile, the case using the coded delivery scheme \cite{fundamental} was investigated in \cite{fundamentallimits}.

There are some preliminary studies considering user mobility. In \cite{femtomobility}, Poularakis \emph{et al.} studied a femto-caching network with user mobility. A Markov chain model was adopted to represent which helper, i.e., a particular femtocell BS, was accessed by a specific user in different time slots. However, in D2D caching networks, such a model cannot be adopted, since there is no fixed caching helper, and all the mobile users may move over time. The effect of user mobility on D2D caching was investigated via simulations in \cite{D2Dmobility}, which showed that user mobility does not have a significant impact on a random caching scheme. However, such a caching scheme failed to take advantage of the user mobility pattern.  In \cite{krishnan2016effect}, it showed that user mobility has positive effect on D2D caching. In \cite{mobilityD2Dcaching}, Lan \emph{et al.} considered the case where mobile users can update caching content based on the file requirement and user mobility. However, it was assumed that one complete file can be transmitted via any D2D link when two users contact, which is not practical, considering the limited communication time and transmission rate. More recently, several design methodologies for mobility-aware caching were proposed in \cite{magmobility}, but more thorough investigations will be needed, especially on practical implementations.

\subsection{Contributions}

In this paper, we propose a mobility-aware caching placement strategy in D2D networks. The information of user connectivity in the user mobility pattern is captured with the inter-contact model \cite{exintercontactmodel,magazine}. Specifically, as shown in Fig. \ref{intercontact}, for an arbitrary pair of mobile users, the timeline consists of both \emph{contact times}, defined as the times when the users are within the transmission range, and \emph{inter-contact times}, defined as the times between contact times \cite{pocket}.
%Locations of contact times in the timeline for any two devices are modeled as a Poisson process to capture the average pairwise inter-contact times and make the derivation of delay times tractable \cite{exintercontactmodel}.
Each file is encoded into several segments by the rateless Fountain code \cite{Fountaincodes}, and a mobile user needs to collect enough encoded segments to recover the requested file. Once a user is in contact with some other users who cache part of its requested file, it will get some of the segments via D2D communication. Our objective is to design caching placement to maximize the data offloading ratio. The main contributions of this paper are summarized as follows:
\begin{itemize}
\item As each mobile user may get the requested file via multiple contacts with other users, the complexity of calculating the objective function, i.e., the data offloading ratio, increases exponentially with the number of users, which brings a major difficulty for algorithm design. We first propose a divide and conquer algorithm to efficiently evaluate the objective, with quadratic complexity with respect to the number of users.
\item The caching placement problem is shown to be NP-hard, and a dynamic programming (DP) algorithm is proposed to obtain the optimal solution with much lower complexity compared to exhaustive search. Although the DP algorithm is still impractical, it can serve as a performance benchmark for systems with small to medium sizes. Moreover, its computation complexity is linear with the number of files, and thus it can easily deal with a large file library.
\item To propose a practical solution, we reformulate the problem and prove that it is a monotone submodular maximization problem over a matroid constraint. The main contribution in this part is to prove that the complicated objective function is a monotone submodular function. With the reformulated form, a greedy algorithm is developed, which achieves at least $\frac{1}{2}$ of the optimal value.
\item Simulation results show that the proposed mobility-aware caching strategy provides significant performance gains over the commonly used popular caching strategy \cite{pupolarcaching} and random caching strategy \cite{randomcache}. Meanwhile, the greedy algorithm achieves near-optimal performance. Moreover, we observe that users with very low velocity tend to rarely contact with others, and thus they need to cache the most popular files to meet their own requirements. On the other hand, the users with high velocity contact others frequently, and they need to cache the most popular files as well, in order to help others to download the requested files. Meanwhile, the users with medium velocity need to cache less popular files to avoid duplicate caching.
\end{itemize}

\subsection{Organization}

The remainder of this paper is organized as follows. In Section II, we introduce the system model and formulate the mobility-aware caching placement problem. An optimal DP algorithm is proposed to serve as the performance benchmark in Section III. In Section IV, we reformulate the caching placement problem as a monotone submodular maximization problem over a matroid constraint, and adopt a greedy algorithm to solve it. The simulation results are provided in Section IV. Finally, we conclude this paper in Section V.

%\begin{figure}[!t]
%  \centering
%  \includegraphics[width=3.5in]{intercontact}
%  \caption{The timeline for an arbitrary pair of mobile users.}
%  \label{intercontact}
%\end{figure}

\section{System Model and Problem Formulation}

\begin{figure}
\begin{subfigure}{0.55\textwidth}
\includegraphics[width=\linewidth]{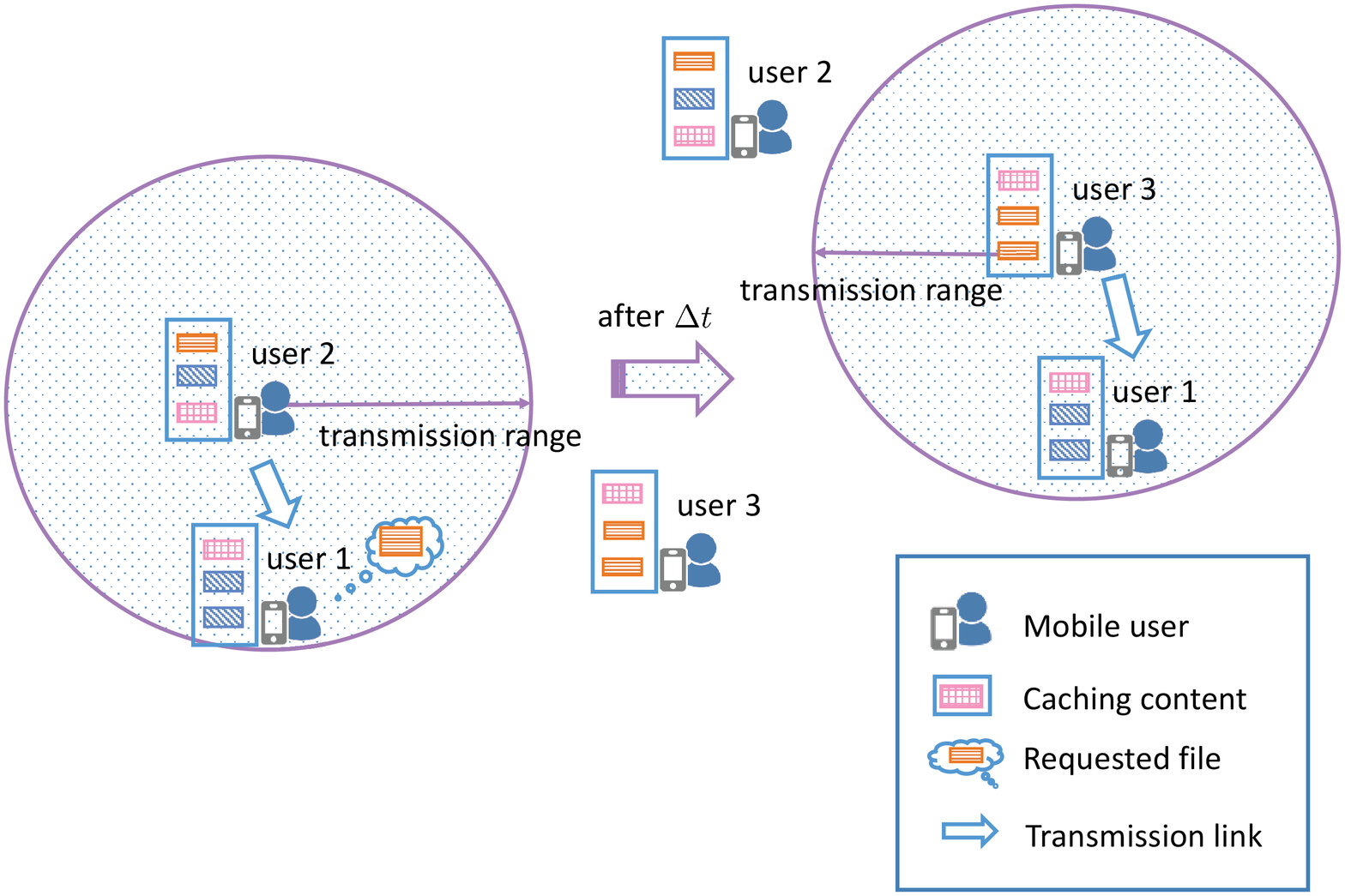}
\caption{A sample network.} \label{fig:1a}
\end{subfigure}
\hspace*{\fill} % separation between the subfigures
\begin{subfigure}{0.35\textwidth}
\vspace*{20mm}
\includegraphics[width=\linewidth]{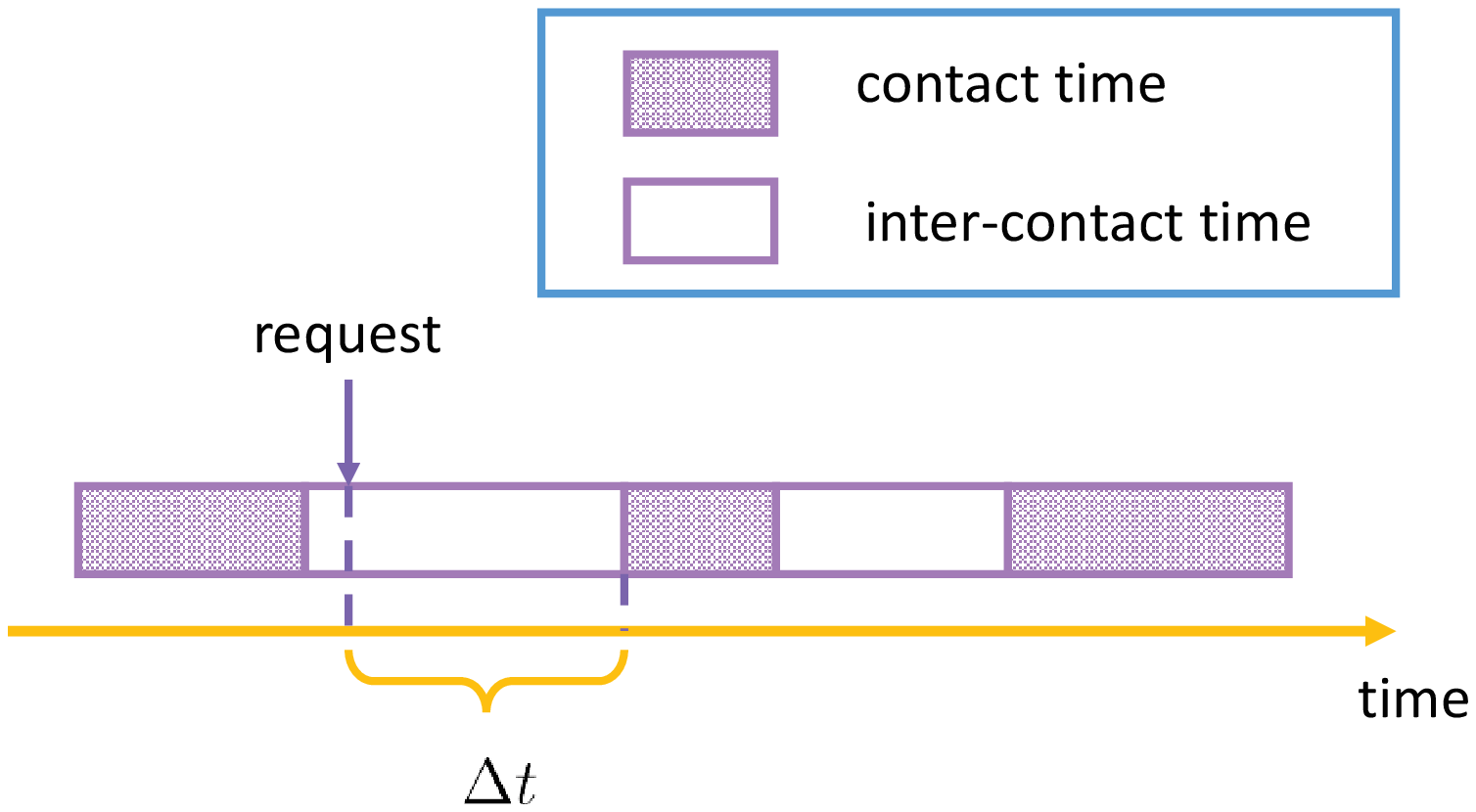}
\caption{The timeline of user 1 and user 3.} \label{intercontact}
\end{subfigure}
\caption{A sample network with $N_u=3$, $N_{file}=3$, $C=3$, $B=1$, and $K_f=2$ for $f \in \mathcal{F}$.} \label{model}
\end{figure}

%\begin{figure}[!t]
%  \centering
%  \includegraphics[width=4.5in]{model_D2D3}
%  \caption{A sample network with $N_u=3$, $N_{file}=3$, $C=3$, $B=1$, and $K_f=2$ for $f \in \mathcal{F}$.}
%  \label{model}
%\end{figure}
In this section, we will firstly introduce the mobility model adopted in our study, together with the caching strategy and file transmission model. The caching placement problem will then be formulated.

\subsection{User Mobility Model}
The inter-contact model can capture the connectivity information in the user mobility pattern, and has been widely investigated in wireless networks \cite{exintercontactmodel,agginter,intercontactmodel,aggregate_real}. Thus, it is adopted in this paper to model the mobility pattern of mobile users. In this model, mobile users may contact with each other when they are within the transmission range. Correspondingly, the \emph{contact time} for two mobile users is defined as the time that they can contact with each other, i.e., they may exchange files during the contact time. Then, the \emph{inter-contact time} for two mobile devices is defined as the time between two consecutive contact times. Specifically, we consider a network with $N_u$ mobile users, whose index set is denoted as $\mathcal{D}=\{1,2,...,N_u\}$. Same as \cite{exintercontactmodel}, we model the locations of contact times in the timeline for any two users $i$ and $j$ as a Poisson process with intensity $\lambda_{i,j}$. For simplification, we assume that the timelines for different device pairs are independent. We call $\lambda_{i,j}$ as the \emph{pairwise contact rate} between users $i$ and $j$, which represents the average number of contacts per unit time. The pairwise contact rate can be estimated from historical data, and we will test our results on real-life data-sets in the simulation.

\subsection{Caching Strategy and File Delivery Model}
We consider a library of $N_{file}$ files, whose index set is denoted as $\mathcal{F}=\{1,2,...,N_{file}\}$. Rateless Fountain coding is applied \cite{Fountaincodes,femtomobility}, where each file is encoded into a large number of different segments, and it can be recovered by collecting a certain number of encoded segments. Accordingly, it can be guaranteed that there is no repetitive encoded segment in the network. Specifically, we assume that each file is encoded into multiple segments, each with size $s$ bits, and file $f$ can be recovered by collecting $K_f$ encoded segments. Note that the value of $K_f$ depends on the size of file $f$. It is assumed that each mobile user reserves a certain amount of storage capacity for caching, which can store at most $C$ encoded segments. The number of encoded segments of file $f$ cached in user $i$ is denoted as $x_{i,f}$.

Mobile users will request files in the file library based on their demands. For simplicity, we assume that the requests of all the users follow the same distribution, and file $f$ is requested by one user with probability $p_f$, where $\sum \limits_{f \in \mathcal{F}} p_f =1$. When a user $i$ requests a file $f$, it will start to download encoded segments of file $f$ from the encountered users, and also check its own cache. We assume that the duration of each contact of users $i$ and $j$ is $t^c_{i,j}$ seconds, and the transmission rate from user $j$ to user $i$ is $r_{i,j}$ bps. Accordingly, we consider that $B_{i,j}=\left\lfloor \frac{t^c_{i,j}r_{i,j}}{s} \right\rfloor$ segments can be transmitted within one contact from user $j$ to user $i$, which is more practical than most existing works, e.g., \cite{mobilityD2Dcaching}. We also assume that there is a delay constraint, denoted as $T^d$. If user $i$ cannot collect at least $K_f$ different encoded segments of file $f$ within $T^d$, it will request the remaining segments from the BS. For example, as shown in Fig. \ref{model}, user $1$ requests a file, which is not in its own cache. Thus, it starts to wait for encountering the users storing the requested file. Within time $t<T^d$, user $1$ first gets one segment from user $2$, and then another segment from user $3$. As  user $1$ collects enough segments via D2D links to recover the requested file, it does not need to download the file from the BS.

\subsection{Problem Formulation}
In this paper, we investigate the caching placement strategy to maximize the \emph{data offloading ratio}, i.e., the percentage of the requested data that can be delivered via D2D links, rather than via the BS. If the D2D links can offload more data from the BS, it will lead to higher spatial reuse efficiency and also significantly reduce the backhaul burden. Specifically, for user $i$, the data offloading ratio is defined as 
\begin{equation} 
E_i=\mathbb{E}_{f \in \mathcal{F}} \left[ \mathbb{E}_{u_{i,f}} \left[ \left. \frac{\min \left(u_{i,f}, K_f \right)}{K_f} \right| \text{User $i$ requests file $f$}\right] \right]=
\sum \limits_{f \in \mathcal{F}} p_f \left\{\frac{ \mathbb{E}_{u_{i,f}}\left[ \min \left(u_{i,f}, K_f \right) \right]}{K_f}\right\},
\end{equation}
where $u_{i,f}=\sum_{j \in \mathcal{D}} u^j_{i,f}$ denotes the number of encoded segments of file $f$ that can be collected by user $i$ within time $T^d$, of which $u^j_{i,f}$ segments can be collected from user $j$. Let $M_{i,j}$, with $i \in \mathcal{D}$ and $j \in \mathcal{D}$, denote the number of contact times for user $i$ and $j$ within time $T^d$. Based on the memoryless property of the Poisson process, $M_{i,j}$ follows a Poisson distribution with mean $\lambda_{i,j}T^d$. Since $B_{i,j}$ segments can be transmitted from user $j$ in one contact time, user $i$ can maximally download $B_{i,j}M_{i,j}$ segments from user $j$ within $T^d$. Considering the cached content at user $j$, user $i$ will get $u^j_{i,f}=\min (B_{i,j}M_{i,j},x_{j,f})$ encoded segments of file $f$ from user $j$. The pairwise contact rate $\lambda_{i,i}$ can be regarded as infinity, which implies that user $i$ can access all the contents in its own cache. Thus, during time $T^d$, user $i$ can collect $u_{i,f}=\sum_{j \in \mathcal{D}} \min (B_{i,j} M_{i,j},x_{j,f})$ encoded segments of file $f$ from all users. Then, the data offloading ratio for user $i$ is
%The requested file can be recovered by collecting $K_f$ segments, and thus, it is unnecessary to download more than $K_f$ segments. Therefore, when user $i$ requests file $f$, the average data offloading ratio is $\mathbb{E}[\min (u_{i,f},K_f)/K_f]$. Considering the file request distribution, the overall data offloading ratio for user $i$ is
\begin{equation} \label{expectation}
E_i=\sum \limits_{f \in \mathcal{F}} \frac{p_f}{K_f} \left\{ \mathbb{E}\left[ \min \left(\sum_{j \in \mathcal{D}} \min (B_{i,j} M_{i,j},x_{j,f}), K_f \right) \right]\right\}.
\end{equation}
Accordingly, the overall average data offloading ratio of all the users is $\frac{1}{N_u} \sum\limits_{i \in \mathcal{D}} E_i$, and the mobility-aware caching placement problem is formulated as
\begin{align}
\mathop {\max }\limits_{X } \quad & \frac{1}{N_u} \sum\limits_{i \in \mathcal{D}} E_i ,\label{problem}
\end{align}
\begin{align}
\text{s.t.} \quad &\sum \limits_{f \in \mathcal{F}} x_{j,f} \leqslant C , \forall j \in \mathcal{D}, \tag{\ref{problem}a} \\
& x_{j,f} \in \mathbb{N}, \forall j \in \mathcal{D} \text{ and } f \in \mathcal{F}, \tag{\ref{problem}b} \\
& x_{j,f} \leqslant K_f, \forall j \in \mathcal{D} \text{ and } f \in \mathcal{F}, \tag{\ref{problem}c}
\end{align}
where constraint (\ref{problem}a) implies that each user device cannot store more than $C$ segments, (\ref{problem}b) guarantees that each encoded segment is either fully stored or not stored at all in one user device, and (\ref{problem}c) implies that, since file $f$ can be recovered by $K_f$ encoded segments, it is redundant to store more than $K_f$ segments of file $f$ at each user. The caching placement problem (\ref{problem}) is a mixed integer nonlinear programming (MINLP) problem. Next, we will show that it is NP-hard in Lemma \ref{NP}. 

\begin{lem} \label{NP}
	Problem (\ref{problem}) is an NP-hard problem.
\end{lem}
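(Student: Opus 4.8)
The plan is to prove NP-hardness by a polynomial-time reduction from a classical NP-complete problem — most naturally \textbf{Set Cover} (equivalently \textbf{Maximum Coverage}) — to the decision version of (\ref{problem}): given an instance together with a threshold $\theta$, decide whether some feasible $X$ achieves $\frac{1}{N_u}\sum_{i\in\mathcal D}E_i\ge\theta$. First I would strip away the apparent generality of the objective by passing to a transparent special case: take $K_f=1$ for every file (so $x_{j,f}\in\{0,1\}$ and, by (\ref{problem}c), a segment cannot be ``split''); take a degenerate mobility pattern in which each $\lambda_{i,j}$ is either $0$ or a large common value (so a contact is essentially either impossible or almost sure); and take $B_{i,j}\ge 1$. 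Under these choices the nested minimums in (\ref{expectation}) collapse, and $E_i$ becomes, up to a small and controllable error, the $p$-weighted fraction of files $f$ for which some user in the closed contact-neighbourhood $N[i]$ (i.e.\ $i$ together with the users it can contact) caches $f$. The objective is then a weighted maximum-coverage functional over the contact graph subject to a uniform per-user budget $C$ — essentially the (uncoded) content-placement problem already known to be NP-hard.

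To keep the argument self-contained I would then exhibit an explicit gadget realizing a Set Cover instance $(U,\mathcal S,k)$ inside this special case: let the files encode the elements of $U$, let the contact graph encode set membership (an ``element node'' contacts a ``set node'' precisely when the element lies in the set), choose the number of files, the request distribution $p$, and the budget $C$, and add auxiliary nodes/files as needed so that a caching placement of value at least $\theta$ exists if and only if $\mathcal S$ admits a subcollection of size at most $k$ covering $U$. The delay $T^d$ (equivalently the common nonzero $\lambda$) is taken large enough — which costs only $O(\log N_u)$ bits — that the aggregate contribution of the low-probability events $\{M_{i,j}=0\}$ stays strictly below the gap separating yes- from no-instances of Set Cover; once this is arranged, the equivalence ``yes-instance $\iff$ optimum above $\theta$'' is a routine forward/backward check, and the whole construction is clearly polynomial.

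I expect the genuine difficulty to lie almost entirely in the gadget design, for two reasons. First, the objective in (\ref{expectation}) has several ``softening'' features that tend to destroy combinatorial hardness and must be neutralized by the construction: a single cache reaching a user contributes \emph{linearly} in its stored segments (so single-cache subproblems are trivially greedy-solvable), a file's segments may be split freely across different users' caches, the per-user budget $C$ is uniform, and every user is simultaneously a potential cache — so the gadget must be arranged (e.g.\ via $K_f=1$ to kill splitting and partial credit, and via padding files/nodes so the uniform budget behaves as intended) so that none of these lets the optimizer bypass the encoded instance. Second, because each $M_{i,j}$ is Poisson, no contact can be made perfectly deterministic, so the reduction must carry the quantitative error bound of the previous paragraph rather than appeal to an idealized deterministic model. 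Everything else — verifying the collapse of (\ref{expectation}) and the two implications of the equivalence — is mechanical, and as a shortcut one may even dispense with the explicit gadget by observing that the $K_f=1$, near-deterministic special case \emph{contains} the uncoded content-placement problem whose NP-hardness is already established in the caching literature.
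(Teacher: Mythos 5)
Your overall strategy matches the paper's: specialize to $K_f=1$, $C$ small, $B_{i,j}=1$, and a $\{0,\text{large}\}$-valued contact-rate matrix so that the objective collapses to a weighted coverage functional over the contact graph, then reduce from a covering problem. Your treatment of the Poisson randomness (finite but large $\lambda$, with the aggregate probability of a missed contact pushed below the yes/no gap) is fine and in fact more careful than the paper, which simply sets $\lambda_{i,j}=+\infty$ for the ``member'' pairs so that the objective becomes exactly the indicator $\mathbbm{1}_{\exists j:\, b_i\in\mathcal{A}^m_j,\ x_{j,f}=1}$.

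The genuine gap is your choice of source problem. Set Cover (or Maximum Coverage) carries a \emph{global} cardinality constraint --- ``use at most $k$ sets'' --- whereas the only constraints available in problem (\ref{problem}) are the \emph{uniform per-user} budgets $\sum_f x_{j,f}\le C$; every user caches, and nothing in the model limits how many users cache a given useful file. Your gadget sketch (``add auxiliary nodes/files as needed so that \dots'') never says how the bound $k$ is to be simulated, and this is exactly the part that does not go through routinely: with files identified with elements and the contact graph encoding membership, any placement in which each set-node caches an element it contains is feasible, and there is no mechanism forcing only $k$ set-nodes to be ``active.'' The paper sidesteps this entirely by reducing from the $2$-disjoint-set-covers problem: with $N_{file}=2$, $K_f=1$ and $C=1$, each set-node's single cache slot \emph{is} the combinatorial choice (``which of the two covers do I join?''), the objective attains $|\mathcal{B}|/N_u$ iff both files cover every element-node, and no global counter is needed. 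Your closing shortcut --- observing that the special case contains the uncoded content-placement (femto-caching) problem whose NP-hardness is known --- is a legitimate repair, but note that the cited hardness result is itself proved via $2$DSC, i.e.\ it is the paper's reduction by reference rather than an independent Set Cover argument. To make your proof self-contained, replace Set Cover by $2$DSC (or by another source problem whose choice structure is local), and the rest of your plan goes through.
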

\begin{proof}
	See Appendix A. 	
\end{proof}

In the following sections, we will propose effective algorithms to solve it.

\subsection{A Divide and Conquer Algorithm To Evaluate the Objective}
To evaluate the objective function of the optimization problem (\ref{problem}), the distribution of the summation of $N_u$ independent random variables need to be obtained. A direct calculation will incur high computational complexity, which increases exponentially with the number of users $N_u$. To efficiently solve (\ref{problem}), we first develop a time-efficient approach to evaluate the objective. With the definition of the expectation, we can rewrite the objective function as
\begin{align} \label{ob}
\frac{1}{N_u} \sum\limits_{i \in \mathcal{D}} E_i =& \frac{1}{N_u}\sum \limits_{i \in \mathcal{D}}\sum \limits_{f \in \mathcal{F}} \frac{p_f}{K_f} \Bigg\{ \sum \limits_{q=0}^{K_f-1} q \Pr \left[  \sum_{j \in \mathcal{D}} \min (B_{i,j} M_{i,j},x_{j,f}) =q\right] \notag 
\end{align}
\begin{align}
&+K_f - K_f\Pr \left[ \sum_{j \in \mathcal{D}} \min (B_{i,j} M_{i,j},x_{j,f}) \leqslant K_f-1 \right] \Bigg\}.
\end{align}
Then, the main difficulty is to get the distribution of the summation of the $N_u$ independent random variables, denoted as $S^u_{i,f}=\sum_{j \in \mathcal{D}} \min (B_{i,j} M_{i,j},x_{j,f})$. In the following, we will propose a divide and conquer algorithm to reduce the computation complexity of getting the cumulative distribution function (cdf) of $S^u_{i,f}$. The algorithm decomposes the original problem into multiple sub-problems, and solve them one by one. Then, the solution of the original problem is constructed based on those for the sub-problems.

Firstly, the cdf of $S^u_{i,f}$ can be written as
\begin{equation} \label{cdf}
\text{Pr}\left[ \sum_{j = 1}^{N_u} V_{j}^{(i,f)} \leqslant K' \right],
\end{equation}
where $K'$ is an integer and $V_{j}^{(i,f)}=\min (B_{i,j}M_{i,j}, x_{j,f})$, with $1 \leqslant j \leqslant N_u$, is a non-negative random variable.
Since all the random variables are non-negative integer numbers, the event $K'<0$ is with zero probability. In the following, we only consider the case where $K'$ is a non-negative integer number.

Secondly, we define a series of sub-problems, denoted as $P(J,K^r)$, where $0 \leqslant J \leqslant N_u$, $0 \leqslant K^r \leqslant K'$, and both $J$ and $K^r$ are integer numbers. Each sub-problem $P(J,K^r)$ calculates the probability that the summation of random variables $V_{j}^{(i,f)}$, with $1 \leqslant j \leqslant J$, is less than or equal to $K^r$, which is given as
\begin{equation} \label{sub-problem}
P(J,K^r)=\text{Pr} \left[ \sum_{j = 1}^{J} V_{j}^{(i,f)} \leqslant K^r \right].
\end{equation}

Finally, we will show that the probability given in (\ref{cdf}) can be calculated by solving multiple sub-problems defined in (\ref{sub-problem}). When $J=0$, no random variable is considered in the sub-problem, which becomes $P(J,K^r)=\Pr[0 \leqslant K^r]=1$. When $J \geqslant 1$, the sub-problem $P(J,K^r)$ can be solved based on the solutions of $P(J-1,\cdot)$, as shown in lemma \ref{lem_re}.
\begin{lem} \label{lem_re}
The sub-problem $P(J,K^r)$, with $1 \leqslant J \leqslant N_u$, $0 \leqslant K^r \leqslant K'$, can be constructed by
\begin{equation} \label{sub_re}
P(J,K^r)=\sum \limits_{z=0}^{K^r}
\Pr \left[V_{J}^{(i,f)}=z \right] \times P(J-1,K^r-z),
\end{equation}
where
\begin{equation} \label{sub_probability}
\Pr\left[V_{J}^{(i,f)}=z\right]=
\begin{cases}
\frac{(\lambda_{i,J}T^d)^{\frac{z}{B_{i,J}}} \exp (-\lambda_{i,J} T^d)}{\left(\frac{z}{B_{i,J}} \right)!} & \text{ if } z < x_{J,f} \text{ and } z \equiv 0 \text{ } \left( \text{\emph{mod} } B_{i,J} \right), \\
1-\frac{\Gamma \left( \left\lceil \frac{z}{B_{i,J}}\right\rceil, \lambda_{i,J} T^d \right) }{\left( \left\lceil \frac{z}{B_{i,J}}\right\rceil-1 \right)!} & \text{ if } z= x_{J,f}, \\
0 & \text{ otherwise. }
\end{cases}
\end{equation}
\end{lem}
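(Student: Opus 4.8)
The plan is to prove the two assertions of the lemma in turn: first the recursion \eqref{sub_re}, which is just a convolution identity for the cumulative distribution function of a sum of independent non-negative variables, and then the closed form \eqref{sub_probability} for the pmf of a single summand.

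For the recursion I would start from the definition \eqref{sub-problem} and split off the last term, $\sum_{j=1}^{J} V_j^{(i,f)} = V_J^{(i,f)} + \sum_{j=1}^{J-1} V_j^{(i,f)}$. The key point is independence: $V_J^{(i,f)} = \min(B_{i,J}M_{i,J}, x_{J,f})$ is a deterministic function of $M_{i,J}$ alone, since $B_{i,J}$ and the caching variables $x_{J,f}$ are fixed, and $M_{i,J}$ is independent of $M_{i,1},\dots,M_{i,J-1}$ because the contact processes of distinct device pairs are assumed independent; hence $V_J^{(i,f)}$ is independent of $\sum_{j=1}^{J-1} V_j^{(i,f)}$. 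Conditioning on the value of $V_J^{(i,f)}$ and applying the law of total probability,
\begin{equation*}
P(J,K^r) = \sum_{z=0}^{\infty} \Pr\!\left[V_J^{(i,f)}=z\right]\Pr\!\left[\sum_{j=1}^{J-1} V_j^{(i,f)} \leqslant K^r - z\right].
\end{equation*}
Since each $V_j^{(i,f)}$ is a non-negative integer, the inner probability vanishes whenever $z > K^r$, so the sum truncates at $z=K^r$; and for $0\leqslant z\leqslant K^r$ it equals $P(J-1,K^r-z)$ by definition, which gives \eqref{sub_re}. Unrolling this down to the base case $P(0,\cdot)=1$ then identifies the target cdf \eqref{cdf} with $P(N_u,K')$.

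For the pmf, I would compute the law of $V_J^{(i,f)}=\min(B_{i,J}M_{i,J},x_{J,f})$ directly, using the fact — established earlier via the memoryless property — that $M_{i,J}\sim\mathrm{Poisson}(\lambda_{i,J}T^d)$. Because $B_{i,J}M_{i,J}$ ranges over multiples of $B_{i,J}$, I split on the target value $z$. (i) If $z<x_{J,f}$, then $\min(B_{i,J}M_{i,J},x_{J,f})=z$ can only happen when $B_{i,J}\mid z$, and then it is equivalent to $M_{i,J}=z/B_{i,J}$, so $\Pr[V_J^{(i,f)}=z]$ equals the Poisson pmf at $z/B_{i,J}$, the first line. (ii) If $z=x_{J,f}$, then $\min(B_{i,J}M_{i,J},x_{J,f})=x_{J,f}$ iff $B_{i,J}M_{i,J}\geqslant x_{J,f}$, i.e.\ iff $M_{i,J}\geqslant\lceil x_{J,f}/B_{i,J}\rceil$ by integrality of $M_{i,J}$; writing this tail as $1-\Pr[M_{i,J}\leqslant\lceil z/B_{i,J}\rceil-1]$ and using the standard identity $\Pr[M_{i,J}\leqslant n]=\Gamma(n+1,\lambda_{i,J}T^d)/n!$ yields the second line. (iii) In all other cases the event is impossible, giving probability $0$.

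The argument is essentially bookkeeping, so there is no deep obstacle; the points needing care are the justification of the independence invoked in the convolution (which is immediate from the independence-of-timelines assumption), the handling of the congruence and ceiling conditions at the boundary — in particular verifying that cases (i) and (ii) stay consistent and mutually exclusive when $x_{J,f}$ itself happens to be a multiple of $B_{i,J}$ — and the invocation of the Poisson-cdf/incomplete-gamma identity. Degenerate situations such as $B_{i,J}=0$ (user $J$ cannot deliver any segment in a contact) or $x_{J,f}=0$ collapse $V_J^{(i,f)}$ to the constant $0$ and are checked by inspection.
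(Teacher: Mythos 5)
Your proof is correct and follows essentially the same route as the paper's own (much terser) argument: the law of total probability gives the convolution recursion, and the pmf of $V_J^{(i,f)}=\min(B_{i,J}M_{i,J},x_{J,f})$ is obtained by a direct case split on $z<x_{J,f}$, $z=x_{J,f}$, $z>x_{J,f}$ combined with the Poisson law of $M_{i,J}$. You in fact spell out details the paper leaves implicit (independence of the timelines justifying the factorization, truncation of the sum at $K^r$, and the divisibility condition $B_{i,J}\mid z$), so nothing is missing.
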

\begin{proof}
Eq. (\ref{sub_re}) follows the law of total probability, and the probability $\Pr \left[ V_J^{(i,f)} =z \right]$ equals
\begin{equation} \label{sub_probability2}
\Pr\left[V_{J}^{(i,f)}=z\right]=\Pr \left[ \min (BM_{i,J}, x_{J,f})=z \right]=
\begin{cases}
\Pr \left[ M_{i,J}= \frac{z}{B_{i,J}} \right] &  \text{ if } z < x_{J,f}, \\
\Pr \left[ M_{i,J} \geqslant \frac{z}{B_{i,J}}  \right] &  \text{ if } z = x_{J,f}, \\
0 & \text{ if } z > x_{J,f}.
\end{cases}
\end{equation}
Since $M_{i,J}$ is a Poisson random variable, we can easily get the result in (\ref{sub_probability}).
\end{proof}
Accordingly, when $J=N_u$, the cdf of $S^u_{i,f}$ can be obtained. The divide and conquer algorithm is summarized in Algorithm \ref{alg:dc}. Using the cdf of $S^u_{i,f}$, we can get the objective value based on (\ref{ob}). The overall complexity of evaluating the objective function is $\mathcal{O}(N_{file} N_u^2 K_{\max}^2)$, where $K_{\max}=\max \limits_{f \in \mathcal{F}} K_f$. Compared to calculating the objective directly using multiple summations, whose complexity is $\mathcal{O}(N_{file} N_u (K_{\max})^{N_u})$, the proposed divide and conquer algorithm significantly reduces the complexity.
\begin{algorithm}[!t]
\caption{The divide and conquer algorithm to calculate $\text{Pr}\left[ \sum_{j = 1}^{N_u} V_{j}^{(i,f)} \leqslant K' \right]$}
\label{alg:dc}
\begin{algorithmic}[1]
\FORALL{$0 \leqslant K^r \leqslant K'$}
\STATE {Set $P(0,K^r)=1$.}
\ENDFOR
\FOR{$J=1$ to $N_u$}
\FORALL{$0 \leqslant K^r \leqslant K'$}
\STATE {Set $P(J,K^r)=0$.}
\FORALL{$0 \leqslant z \leqslant K^r$}
\STATE Set $P(J,K^r) = P(J,K^r) + \Pr \left[V_{J}^{(i,f)}=z \right] \times P(J-1,K^r-z)$.
\ENDFOR
\ENDFOR
\ENDFOR
\end{algorithmic}
\end{algorithm}

\section{Optimal Mobility-Aware Caching Algorithm}
In this section, an optimal DP algorithm will be proposed to solve the mobility-aware caching placement problem. It will help to demonstrate the effectiveness of the proposed caching strategy and also serve as a performance benchmark to evaluate the sub-optimal algorithm proposed in Section IV.

\subsection{Optimal Caching Placement Algorithm}
\begin{figure}[!t]
  \centering
  \includegraphics[width=6.5in]{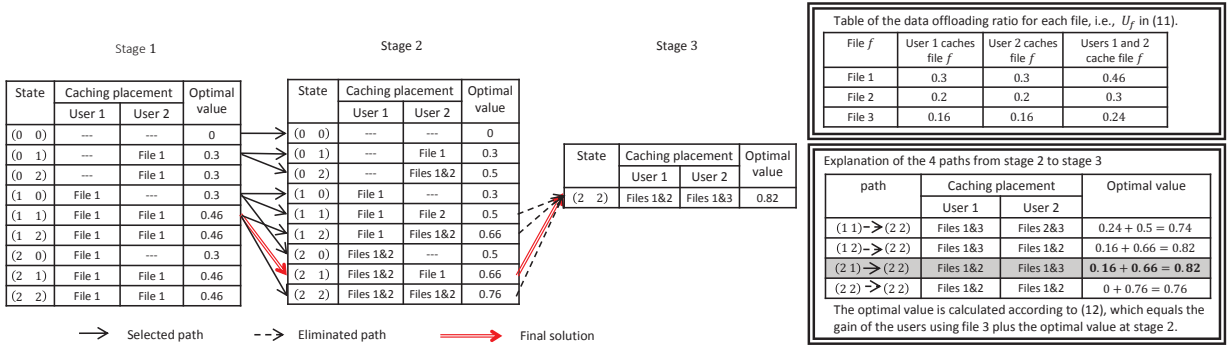}
  \caption{An illustration of the proposed DP algorithm with $N_u=2$, $N_f=3$ and $K_f=1$. There are $3$ stages, where the eliminated paths from stage 1 to stage 2 are omitted. The states at stage $m$ represent the remaining caching capacity to store the first $m$ files. For example, state $(1 \quad 2)$ at stage $2$ means that user 1 caches one segment of file 1 or 2, and user 2 caches two segments of files 1 and 2 in total. The final solution first caches file 1 at both users 1 and 2, then caches file 2 at user 1, finally caches file 3 at user 2.}
  \label{DP-stage}
\end{figure}
DP is an effective algorithm to deal with integer variables, which divides the original problem into multiple stages, and solves the problem stage by stage \cite{dpint}. Specifically, each stage is associated with multiple states, and the optimal solution at each stage can be obtained based on the optimal solutions at previous stages according to a recursive relationship. Thus, with the optimal solution at the beginning stage and the recursive relationship, the optimal solution of the original problem can be constructed. In the following, we will illustrate the key steps of the proposed DP algorithm for the caching placement problem, including dividing the original problem into stages and finding the corresponding recursive relationship.
%\begin{figure}[!t]
%  \centering
%  \includegraphics[width=5.5in]{DP-stage}
%  \caption{An illustration of the DP algorithm.}
%  \label{fig_DP}
%\end{figure}
\subsubsection{Stages and States}
At each stage, we will design caching placement for a specific file. Each state includes the information of the remaining cache capacities for all the mobile users. We denote the state as a vector $\mathbf{C^r}=[c^r_j]_{1 \leqslant j \leqslant N_u}$, where $ c^r_j$ is the remaining cache capacity of user $j$.
%Fig. \ref{fig_DP} shows an example for the DP algorithm. The caching strategy is a particular path from the first stage to the last.

For state $\mathbf{C^r}$ at the $m$-th stage, $1 \leqslant m \leqslant N_{file}$, the optimal caching placement for the first $m$ files while user $i$ stores no more than $c^r_i$ segments needs to be found. It can be obtained by solving the following problem:
\begin{align}
OPT_{m,\mathbf{C^r}}=\mathop {\max }\limits_{x_{j,f}, j \in \mathcal{D}, 1 \leqslant f \leqslant m } \quad & \sum \limits_{1 \leqslant f \leqslant m} U_f(x_{1,f},x_{2,f}, \dots , x_{N_u,f} ) ,\label{problem_dp}\\
\text{s.t.} \quad &\sum \limits_{0 \leqslant f \leqslant m} x_{j,f} \leqslant c^r_j , \forall j \in \mathcal{D}, \tag{\ref{problem_dp}a}
\end{align}
\begin{align}
& x_{j,f} \in \mathbb{N}, \forall j \in \mathcal{D} \text{ and } 1 \leqslant f \leqslant m, \tag{\ref{problem_dp}b} \\
& x_{j,f} \leqslant K_f, \forall j \in \mathcal{D} \text{ and } 1 \leqslant f \leqslant m, \tag{\ref{problem_dp}c}
\end{align}
where
\begin{equation} 
U_f(x_{1,f},x_{2,f}, \dots , x_{N_u,f} )= \frac{1}{N_u} \sum\limits_{i \in \mathcal{D}} \frac{p_f}{K_f} \left\{ \mathbb{E}\left[ \min \left(\sum \limits_{j \in \mathcal{D}} \min (B_{i,j} M_{i,j},x_{j,f}) , K_f \right) \right]\right\}
\end{equation}
is a function from $N_u$ numbers to a real number representing the data offloading ratio for file $f$, and constraints (\ref{problem_dp}a)-(\ref{problem_dp}c) correspond to constraints (\ref{problem}a)-(\ref{problem}c) respectively. For state $[C]_{N_u \times 1}$ at the $N_{files}$-th stage, problem (\ref{problem_dp}) is the same as the original problem (\ref{problem}). Denote the optimal caching placement at the $m$-th stage with state $\mathbf{C^r}$ as $(x_{m,\mathbf{C^r}})_{j,f}$, which means that user $j$ stores $(x_{m,\mathbf{C^r}})_{j,f}$ segments of file $f$. This problem is still complicated, and next we will develop a recursive relationship to solve it.

\subsubsection{Recursive Relationship}
At the first stage, since there is only one file to cache, the optimal solution is to fill the remaining caching storage of each user with the first file, i.e., $OPT_{1,\mathbf{C^r}}=U_1(\min(K_1,c^r_1), \min(K_1,c^r_2), \dots , \min (K_1,c^r_{N_u}) )$. Then, the optimal solutions at the $m$-th stage, with $2 \leqslant m \leqslant N_{file}$, can be constructed based on the optimal solutions at the $(m-1)$-th stage as shown in Lemma \ref{lem_dp}, and an illustrate is shown in Fig. \ref{DP-stage}.
\begin{lem} \label{lem_dp}
The optimal solution at the $m$-th stage, $2 \leqslant m \leqslant N_{file}$, with state $\mathbf{C^r}$ can be obtained according to the following recursive relationship:
\begin{align}
OPT_{m,\mathbf{C^r}}=\mathop {\max }\limits_{x_{j,m}, j \in \mathcal{D} } \quad & U_m(x_{1,m},x_{2,m}, \dots , x_{N_u,m} )+OPT_{m-1,[c_j^r-x_{j,m}]_{1 \leqslant j \leqslant N_u}} ,\label{problem_re} \\
\text{s.t.} \quad & x_{j,m} \leqslant c^r_j , \forall j \in \mathcal{D}, \tag{\ref{problem_re}a} \\
& x_{j,m} \in \mathbb{N}, \forall j \in \mathcal{D}, \tag{\ref{problem_re}b} \\
& x_{j,m} \leqslant K_f, \forall j \in \mathcal{D}. \tag{\ref{problem_re}c}
\end{align}
\end{lem}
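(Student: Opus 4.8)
The plan is to prove the recursion by the standard optimality-principle (``exchange'') argument, establishing both inequalities between $OPT_{m,\mathbf{C^r}}$ and the right-hand side of (\ref{problem_re}). The structural fact I would exploit is that the objective of problem (\ref{problem_dp}) is a sum $\sum_{1 \leqslant f \leqslant m} U_f(\cdot)$ of per-file terms, and the only constraint coupling different files is the per-user capacity bound (\ref{problem_dp}a); consequently, once the vector $(x_{j,m})_{j \in \mathcal{D}}$ of segments allocated to file $m$ is fixed, the residual optimization over files $1, \dots, m-1$ is exactly the stage-$(m-1)$ problem with state $[c^r_j - x_{j,m}]_{1 \leqslant j \leqslant N_u}$.

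First I would prove ``$\geqslant$''. Take any $(x_{j,m})_{j \in \mathcal{D}}$ satisfying (\ref{problem_re}a)--(\ref{problem_re}c); by (\ref{problem_re}a) the vector $\mathbf{C'} = [c^r_j - x_{j,m}]_{1 \leqslant j \leqslant N_u}$ has non-negative integer entries, so it is a legitimate stage-$(m-1)$ state. Let $(\tilde x_{j,f})_{1 \leqslant f \leqslant m-1}$ be an optimizer of the stage-$(m-1)$ problem with state $\mathbf{C'}$, and concatenate it with the column $(x_{j,m})_j$. The result is feasible for (\ref{problem_dp}): (\ref{problem_dp}b)--(\ref{problem_dp}c) hold entrywise, and for each user $j$, $\sum_{1 \leqslant f \leqslant m} x_{j,f} = x_{j,m} + \sum_{1 \leqslant f \leqslant m-1} \tilde x_{j,f} \leqslant x_{j,m} + (c^r_j - x_{j,m}) = c^r_j$. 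Its objective value is $U_m(x_{1,m}, \dots, x_{N_u,m}) + OPT_{m-1,\mathbf{C'}}$, and maximizing over the admissible $(x_{j,m})_j$ shows that $OPT_{m,\mathbf{C^r}}$ is at least the right-hand side of (\ref{problem_re}).

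Next I would prove ``$\leqslant$''. Let $(x^*_{j,f})_{1 \leqslant f \leqslant m}$ attain $OPT_{m,\mathbf{C^r}}$, and set $x_{j,m} := x^*_{j,m}$, which satisfies (\ref{problem_re}a)--(\ref{problem_re}c) since the $f = m$ instances of (\ref{problem_dp}a)--(\ref{problem_dp}c) do. The truncation $(x^*_{j,f})_{1 \leqslant f \leqslant m-1}$ is feasible for the stage-$(m-1)$ problem with state $[c^r_j - x^*_{j,m}]_{j}$, because $\sum_{1 \leqslant f \leqslant m-1} x^*_{j,f} = \sum_{1 \leqslant f \leqslant m} x^*_{j,f} - x^*_{j,m} \leqslant c^r_j - x^*_{j,m}$; hence $\sum_{1 \leqslant f \leqslant m-1} U_f(x^*_{1,f}, \dots, x^*_{N_u,f}) \leqslant OPT_{m-1,[c^r_j - x^*_{j,m}]_j}$. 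Since the objective separates, $OPT_{m,\mathbf{C^r}} = U_m(x^*_{1,m}, \dots, x^*_{N_u,m}) + \sum_{1 \leqslant f \leqslant m-1} U_f(x^*_{1,f}, \dots, x^*_{N_u,f}) \leqslant U_m(x^*_{1,m}, \dots, x^*_{N_u,m}) + OPT_{m-1,[c^r_j - x^*_{j,m}]_j}$, and the last expression is no larger than the maximum in (\ref{problem_re}). Combining the two inequalities gives the identity.

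The argument is essentially bookkeeping, so I do not expect a genuine obstacle; the one point I would be careful to spell out is the factorization of the feasible region of (\ref{problem_dp})---that file $m$ interacts with files $1, \dots, m-1$ only through the scalar amounts $x_{j,m}$ it subtracts from each cache, with no further coupling---since this is precisely the structural property that legitimizes the DP decomposition. I would also note that every residual state $[c^r_j - x_{j,m}]_j$ invoked is a non-negative integer vector dominated componentwise by $\mathbf{C^r}$, hence a genuine state of stage $m-1$ on which $OPT_{m-1,\cdot}$ is defined, so no term in the recursion is ill-posed.
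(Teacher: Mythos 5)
Your proposal is correct and follows essentially the same route as the paper: condition on the file-$m$ allocation $(x_{j,m})_j$, observe that the residual problem over files $1,\dots,m-1$ is exactly the stage-$(m-1)$ problem with state $[c^r_j-x_{j,m}]_j$ because the objective is separable across files and the only inter-file coupling is the per-user capacity, and then maximize over the file-$m$ allocation. You merely make explicit the two inequalities (feasibility of the concatenated placement for ``$\geqslant$'' and feasibility of the truncated optimizer for ``$\leqslant$'') that the paper's argument states in one compressed step.
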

\begin{proof}
We first consider a certain caching placement for file $m$, i.e., user $j$ stores $x_{j,m}$ segments of file $m$. Then, the remaining cache capacities for the first $(m-1)$ files becomes $[c_j^r-x_{j,m}]_{1 \leqslant j \leqslant N_u}$. Meanwhile, the objective value
can be calculated based on $OPT_{m-1,[c_j^r-x_{j,m}]_{1 \leqslant j \leqslant N_u}}$, given as
\begin{align}
&\mathop {\max }\limits_{x_{j,f}, j \in \mathcal{D}, 1 \leqslant f \leqslant m-1 } \sum \limits_{1 \leqslant f \leqslant m} U_f(x_{1,f},x_{2,f}, \dots , x_{N_u,f} ) \notag \\
=&U_m(x_{1,m},x_{2,m}, \dots , x_{N_u,m} )+\mathop {\max }\limits_{x_{j,f}, j \in \mathcal{D}, 1 \leqslant f \leqslant m-1 } \sum \limits_{1 \leqslant f \leqslant m-1} U_f(x_{1,f},x_{2,f}, \dots , x_{N_u,f} ) \notag \\
=&U_m(x_{1,m},x_{2,m}, \dots , x_{N_u,m} )+OPT_{m-1,[c_i^r-x_{i,m}]_{N_u}}.
\end{align}
Finally, after searching all possible caching placements for file $m$, we can get the recursive relationship in (\ref{problem_re}), where constraints (\ref{problem_re}a)-(\ref{problem_re}c) correspond to constraints (\ref{problem_dp}a)-(\ref{problem_dp}c) respectively.
\end{proof}

\subsubsection{Optimal Caching Placement}
Denote $(x^\star_{1,m}, x^\star_{2,m}, \dots, x^\star_{N_u,m})$ as the optimal solution of problem (\ref{problem_re}), the optimal caching placement at the $m$-th stage with state $\mathbf{C^r}$ is decided as follows. When $m=1$, all the users cache the first file, and thus, the caching placement is
\begin{equation} \label{placement}
(x_{1,\mathbf{C^r}})_{j,f}=
\begin{cases}
\min (K_1, c^r_j) & \text{ if } f=1, \\
\text{undecided} & \text{ if } f>1.
\end{cases}
\end{equation}
When $m>1$, the caching placement for file $m$ is $(x^\star_{1,m},x^\star_{2,m}, \dots, x^\star_{N_u,m})$, and for the first $(m-1)$ files, it is the same as the optimal solution at the $(m-1)$-th stage with state $[c_j^r-x^\star_{j,m}]_{1 \leqslant j \leqslant N_u}$, given as
\begin{equation} \label{placement}
(x_{m,\mathbf{C^r}})_{j,f}=
\begin{cases}
x^\star_{j,m} & \text{ if } f=m, \\
(x_{m-1,[c_j^r-x^\star_{j,m}]_{1 \leqslant j \leqslant N_u}})_{j,f} & \text{ if } f<m, \\
\text{undicided} & \text{ if } f>m.
\end{cases}
\end{equation}
The optimal solution at the first stage can be obtained by filling the caching storage of each user with the first file, and then, each stage can be solved based on the solutions at previous stages. Thus, we can get the optimal solution for the original problem at the $N_{file}$-th stage. The optimal DP algorithm is summarized in Algorithm \ref{alg:dp}.

\subsection{Complexity Analysis}

According to Algorithm \ref{alg:dp}, the complexity of obtaining the optimal solution at the $m$-th stage with state $\mathbf{C^r}$, when $1<m \leqslant N_{file}$, is given as
\begin{equation} \label{tc_dp1}
\begin{aligned}
\mathcal{T}_{state}=\mathcal{O} \left((K_{\max}+1)^{\lVert \mathbf{C^r} \lVert_0 } T_u \right),
\end{aligned}
\end{equation}
where $T_u$ is the time complexity of calculating the function $U_m(x_{1,m},x_{2,m}, \dots , x_{N_u,m} )$, which equals $\mathcal{O}(N_u^2 K_{\max}^2)$ using the divide and conquer algorithm, and $\lVert \cdot \lVert_0$ denotes the number of non-zero elements in the vector. The time complexity of the $m$-th stage can then be derived as
\begin{equation} \label{tc_dp2}
\mathcal{T}_{stage}(m)=
\begin{cases}
\mathcal{O} ((C+1)^{N_u} N_u^2 K_{\max}^2) & \text{ if } m=1, \\
\mathcal{O} \left((C K_{\max}+C+1)^{N_u} N_u^2 K_{\max}^2\right) & \text{ if } 1<m<N_{file}, \\
\mathcal{O} \left((K_{\max}+1)^{N_u} N_u^2 K_{\max}^2\right) & \text{ if } m=N_{file}.
\end{cases}
\end{equation}
Then, we can get the time complexity of the DP algorithm as
\begin{equation} \label{tc_dpall}
\mathcal{T}_{dp}=
\begin{cases}
\mathcal{O} \left( ((C+1)^{N_u}+(K_{\max}+1)^{N_u}) N_u^2 K_{\max}^2 \right) & \text{ if } N_{file} \leqslant 2 \\
\mathcal{O} \left(N_{file}(C K_{\max}+C+1)^{N_u} N_u^2 K_{\max}^2 \right) & \text{ if } N_{file}>2.
\end{cases}
\end{equation}
For comparison, the time complexity of the exhaustive search method is given by
\begin{equation} \label{tc_ex}
\mathcal{T}_{search}=\mathcal{O} \left(N_{file}(N_{file}K_{\max})^{CN_u} N_u^2 K_{\max}^2 \right).
\end{equation}

\begin{algorithm}[!t]
	\caption{The optimal DP algorithm}
	\label{alg:dp}
	\begin{algorithmic}[1]
		\STATE $OPT_{N_{file},[C]_{N_u \times 1}}, \left( x_{N_{file},[C]_{N_u \times 1}} \right)_{j,f}$: the optimal utility value and  caching placement for the original problem
		\FORALL{$\mathbf{C^r} \leqslant [C]_{N_u \times 1}$}
		\STATE {Set $(x_{1,\mathbf{C^r}})_{j,1}=\min (K_1, c^r_j)$ for $j \in \mathcal{D}$.}
		\STATE {Set $OPT_{1,\mathbf{C^r}}=U_1((x_{1,\mathbf{C^r}})_{1,1},(x_{1,\mathbf{C^r}})_{2,1}, \dots , (x_{1,\mathbf{C^r}})_{N_u,1} )$.}
		\ENDFOR
		\FOR{$m=2$ to $N_{file}-1$}
		\FORALL{$\mathbf{C^r} \leqslant [C]_{N_u \times 1}$}
		\STATE {Set $OPT_{m,\mathbf{C^r}}=0$.}
		\FORALL{$x_{j,m}$, $j \in \mathcal{D}$, that satisfies constraints in (\ref{problem_re})}
		\IF{$OPT_{m,\mathbf{C^r}} < U_m(x_{1,m},x_{2,m}, \dots , x_{N_u,m} )+OPT_{m-1,[c_j^r-x_{j,m}]_{1 \leqslant j \leqslant N_u}}$}
		\STATE Set $OPT_{m,\mathbf{C^r}} = U_m(x_{1,m},x_{2,m}, \dots , x_{N_u,m} )+OPT_{m-1,[c_j^r-x_{j,m}]_{1 \leqslant j \leqslant N_u}}$ \\ and $x_{j,m}^\star = x_{j,m}$.
		\ENDIF
		\ENDFOR
		\STATE Update $\left(x_{m,\mathbf{C^r}} \right)_{j,f}$, $j \in \mathcal{D}$, $f \in \mathcal{F}$, according to (\ref{placement}).
		\ENDFOR
		\ENDFOR
		\STATE {Set $m=N_{file}$,$\mathbf{C^r}=[C]_{N_u \times 1}$, and Run Steps 8-14.}
	\end{algorithmic}
\end{algorithm}

The only difference between the complexities of the DP algorithm and the exhaustive search is the bases of the exponential terms, which significantly affect the overall complexities. When $N_{file}=1$, the caching placement can be easily obtained by filling all the caching storage with the only file. When $N_{file} \geqslant 2$, it can be observed that $(N_{file} K_{\max})^C$ increases faster than $(C+1)$, $(K_{\max}+1)$ and $(C K_{\max} +C +1)$ with both $K_{\max}$ and $C$ increasing. Meanwhile, considering the case while both $K_{\max}$ and $C$ equal the minimum value, i.e., $K_{\max}=C=1$, the complexity of the DP algorithm is similar to that of the exhaustive search when $N_{file} \leqslant 3$, and much faster than the later when $N_{file}>3$. Thus, the proposed DP algorithm is superior than the exhaustive search with respect to the time efficiency.

\section{Submodular Optimization Based Algorithm for Mobility-Aware Caching}
The complexity of the DP algorithm increases exponentially with the number of mobile users. Thus, in this section, we aim to design a more practical sub-optimal algorithm. We will first reformulate the original problem to the one that maximizes a monotone submodular function over a matroid constraint. Then, adopting a greedy algorithm, we will provide a near-optimal caching placement algorithm.

\subsection{Submodular Functions and Matroid Constraints}
The submodular maximization problem subject to matroid constraints is an important type of discrete optimization problems. It has been widely investigated, and effective algorithms with provable approximation ratios have been proposed \cite{submodularbook,submodular2,submodular}. A great number of typical and interesting combinational optimization problems are special cases of the submodular maximization problem, e.g., the max-k-cover problem and the max-cut problem \cite{1e}. Recently, submodular maximization problems over matroid constraints have found a wide range of applications in wireless networks, e.g., for network coding design \cite{networkcoding} and femto-caching design \cite{femtocaching2}.
In the following, we will first provide some background information, including necessary definitions and properties of the monotone submodular function and the matroid constraint \cite{submodular2}.
\begin{defn}
Let $S$ be a finite set and $g$ be a real-valued function with subsets of $S$ as its domain. The function $g$ is a monotone \emph{submodular function} if it satisfies the following properties:
\begin{enumerate}
  \item $g(A) \leqslant g(D)$, for all $A \subseteq D \subseteq S$,
  \item $g(A)+g(D) \geqslant g(A \cup D) + g(A \cap D)$.
\end{enumerate}
\end{defn}
\begin{prop}
The following statement also defines a monotone submodular function:
\begin{align}
g(A \cup \{j\}) - g(A) \geqslant g(D \cup \{j\}) - g(D) \geqslant 0,
\text{ for all } A \subset D \subset S \text{ and } j \in S-D.
\end{align}
\end{prop}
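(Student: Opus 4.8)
The plan is to prove that the diminishing-returns inequality stated in this proposition is \emph{equivalent} to the two defining properties of a monotone submodular function in Definition~1; establishing this equivalence is exactly what it means to say the statement ``also defines'' such a function. I would prove both implications separately.

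For the direction Definition~1 $\Rightarrow$ the proposition, the non-negativity $g(D\cup\{j\}) - g(D) \geqslant 0$ follows at once from property~1 applied to the inclusion $D \subseteq D\cup\{j\}$. For the main inequality $g(A\cup\{j\}) - g(A) \geqslant g(D\cup\{j\}) - g(D)$, I would apply property~2 to the pair of sets $A\cup\{j\}$ and $D$: since $A \subset D$ and $j \notin D$ (because $j \in S-D$), we have $(A\cup\{j\})\cup D = D\cup\{j\}$ and $(A\cup\{j\})\cap D = A$, so property~2 becomes $g(A\cup\{j\}) + g(D) \geqslant g(D\cup\{j\}) + g(A)$, which is the desired inequality after rearranging.

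For the direction the proposition $\Rightarrow$ Definition~1, I would use telescoping chain arguments. To recover monotonicity, given $A \subseteq D$ I would list $D\setminus A = \{j_1,\dots,j_k\}$, form the chain $A = A_0 \subset A_1 \subset \cdots \subset A_k = D$ with $A_\ell = A_{\ell-1}\cup\{j_\ell\}$, apply the non-negativity half of the proposition to each single-element increment, and sum to get $g(D) \geqslant g(A)$. To recover property~2, for arbitrary $A,D$ I would list $A\setminus D = \{j_1,\dots,j_m\}$ and run two parallel chains adding the same elements in the same order: one from $A\cap D$ up to $A$, and one from $D$ up to $A\cup D$. At step $\ell$ the ``lower'' set $(A\cap D)\cup\{j_1,\dots,j_{\ell-1}\}$ is contained in the ``upper'' set $D\cup\{j_1,\dots,j_{\ell-1}\}$ and $j_\ell$ lies outside the upper set, so the inequality of the proposition bounds the $\ell$-th lower increment below by the $\ell$-th upper increment. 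Summing over $\ell$ gives $g(A) - g(A\cap D) \geqslant g(A\cup D) - g(D)$, which is property~2 after rearranging.

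The main obstacle is not conceptual but a matter of careful bookkeeping: ensuring that every element inserted along a chain is genuinely absent from the current set so that the proposition actually applies, and handling cleanly the degenerate cases $A = D$, $A\cap D = A$, or $A\cap D = D$, where one or both chains are empty and the claimed inequalities hold trivially. I would also note explicitly that although the proposition is phrased with strict inclusions $\subset$, the chain steps use inclusions that are automatically strict because a new element is added each time, so there is no mismatch.
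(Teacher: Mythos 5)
The paper never actually proves this proposition: it is stated as standard background on submodularity, given only an intuitive gloss (``the gain of adding a new element decreases when the set becomes large''), and implicitly deferred to the cited reference. Your proposal supplies the full equivalence proof, and it is correct. The forward direction---instantiating property 2 of Definition 1 with the pair $A\cup\{j\}$ and $D$, using $(A\cup\{j\})\cup D = D\cup\{j\}$ and $(A\cup\{j\})\cap D = A$ (valid since $j\notin D$)---is the standard argument, as is the converse via telescoping: one chain for monotonicity, and two parallel chains inserting the elements of $A\setminus D$ in the same order to recover $g(A)+g(D)\geqslant g(A\cup D)+g(A\cap D)$. The one point worth tightening is the quantifier hygiene you already flag: as literally written the proposition ranges over strict inclusions $A\subset D$, so extracting the bare non-negativity $g(D\cup\{j\})-g(D)\geqslant 0$ at a set $D$ admitting no proper subset (i.e., $D=\varnothing$) is vacuous unless you instead read off the leftmost increment $g(A\cup\{j\})-g(A)\geqslant 0$ from a triple with $A=\varnothing$ and some nonempty $D$; this is an artifact of the paper's use of $\subset$ where $\subseteq$ is evidently intended, not a defect of your argument. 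In short, your write-up verifies a fact the paper takes on faith, and matches the classical proof one would find in the cited submodular-optimization literature.
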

Proposition 1 provides an intuitive interpretation of the monotone submodular function, i.e., the gain of adding a new element decreases when the set becomes large. It is monotone as adding a new element always has a non-negative gain.
\begin{defn}
Let $S$ be a finite ground set and $\mathcal{I}$ be a nonempty collection of subsets of $S$, and then, the tuple $\mathcal{M}=\{S,\mathcal{I}\}$ is called a \emph{matroid} if
\begin{enumerate}
  \item $\varnothing \in \mathcal{I}$,
  \item If $D \in \mathcal{I}$ and $A \subseteq D$, then $A \in \mathcal{I}$,
  \item If $A,D \in \mathcal{I}$ and $|A|<|D|$, then there exists $j \in D-A$ such that $A \cup \{j\} \in \mathcal{I}$,
\end{enumerate}
where $| \cdot |$ denotes the cardinality.
\end{defn}

\subsection{Problem Reformulation}
In the following, we will show that the original problem (\ref{problem}) can be reformulated as a monotone submodular maximization problem over a matroid constraint. Firstly, the ground set $S$ is defined as $S=\{y_{j,f,k}| j \in \mathcal{D}, f \in \mathcal{F} \text{ and } 1 \leqslant k \leqslant K_f \}$, and each caching placement $Y$ is a subset of $S$. If the element $y_{j,f,k}$ is in $Y$, then it means that the $k$-th segment of file $f$ is cached at user $j$. Specifically, the relationship between $x_{j,f}$, with $ j \in \mathcal{D}$ and $f \in \mathcal{F}$, and the caching placement $Y \subseteq S$ is
\begin{equation}
x_{j,f}=|Y \cap S_{j,f}|,
\end{equation}
where $S_{j,f}$ is a set of all $K_f$ segments of file $f$ that may be cached at user $j$ defined as
\begin{equation}
S_{j,f}=\{y_{j,f,k}| 1 \leqslant k \leqslant K_f \}.
\end{equation}
Next, we rewrite the objective function of problem (\ref{problem}) as a function of subsets of $S$ as
\begin{equation} \label{objective}
g(Y)=\frac{1}{N_u}\sum \limits_{i \in \mathcal{D}}\sum \limits_{f \in \mathcal{F}} \frac{p_f}{K_f} \left\{ \mathbb{E}\left[ \left( \sum_{j \in \mathcal{D}} \min (B_{i,j}M_{i,j},|Y \cap S_{j,f}|),K_f \right) \right]\right\}.
\end{equation}
To prove that the function $g(Y)$ defined in (\ref{objective}) is a monotone submodular function, we will show that $g(Y)$ satisfies Proposition 1. The main difficulty is to derive the gain of adding one element into the caching placement $Y$, which is given in Lemma \ref{lem_gain}.
\begin{lem} \label{lem_gain}
Let $Y \subset S \text{ and } y_{j,f,k} \in S-Y$, and then the gain of adding $y_{j,f,k}$ into the caching placement $Y$ is
\begin{align}
g(Y \cup \{y_{j,f,k}\})-g(Y) =&
\frac{p_f }{N_u K_f}  \sum \limits_{i \in \mathcal{D}} \Pr \bigg[ M_{i,j} \geqslant \left\lceil \frac{|Y \cap S_{j,f}|+1}{B_{i,j}}\right\rceil \bigg] \notag \\
& \times \Pr \bigg[\sum_{j' \in \mathcal{D}, j' \ne j} \min (B_{i,j'}M_{i,j'}, |Y \cap S_{j',f}|) \le K_f- |Y \cap S_{j,f}| -1 \bigg]  \label{gain}
\end{align}
\end{lem}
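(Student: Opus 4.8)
The plan is to compute the finite difference $g(Y\cup\{y_{j,f,k}\})-g(Y)$ explicitly and match it against the right-hand side of (\ref{gain}). First I would use the fact that the blocks $\{S_{j',f'}\}$ partition the ground set $S$, so inserting the single element $y_{j,f,k}\in S_{j,f}\setminus Y$ leaves every count $|Y\cap S_{j',f'}|$ unchanged except $|Y\cap S_{j,f}|$, which increases from $a:=|Y\cap S_{j,f}|$ to $a+1$. Hence in (\ref{objective}) only the summands for this file $f$ are affected, and within each of those only the $j$-th term of the inner sum changes. Abbreviating $Z_i:=\sum_{j'\in\mathcal{D},\,j'\ne j}\min(B_{i,j'}M_{i,j'},|Y\cap S_{j',f}|)$, the gain reduces to
\[
g(Y\cup\{y_{j,f,k}\})-g(Y)=\frac{p_f}{N_uK_f}\sum_{i\in\mathcal{D}}\Big(\mathbb{E}\big[\min(Z_i+\min(B_{i,j}M_{i,j},a+1),K_f)\big]-\mathbb{E}\big[\min(Z_i+\min(B_{i,j}M_{i,j},a),K_f)\big]\Big).
\]

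Next I would analyze the two nested $\min$ operations. Since $\min(B_{i,j}M_{i,j},a+1)-\min(B_{i,j}M_{i,j},a)$ equals $1$ on the event $\mathcal{A}_i:=\{B_{i,j}M_{i,j}\ge a+1\}=\{M_{i,j}\ge\lceil(a+1)/B_{i,j}\rceil\}$ and $0$ on its complement, I would split each expectation over $\mathcal{A}_i$ and $\mathcal{A}_i^c$. On $\mathcal{A}_i^c$ the two $\min$-expressions coincide, so that part cancels. On $\mathcal{A}_i$ one has $\min(B_{i,j}M_{i,j},a)=a$, so the remaining contribution for user $i$ is $\mathbb{E}\big[\mathbbm{1}_{\mathcal{A}_i}\big(\min(Z_i+a+1,K_f)-\min(Z_i+a,K_f)\big)\big]$, and the inner difference equals $\mathbbm{1}[Z_i+a\le K_f-1]=\mathbbm{1}[Z_i\le K_f-a-1]$. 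Thus each $i$-term becomes $\mathbb{E}\big[\mathbbm{1}_{\mathcal{A}_i}\,\mathbbm{1}[Z_i\le K_f-a-1]\big]$.

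Finally I would invoke independence: $\mathbbm{1}_{\mathcal{A}_i}$ is a function of $M_{i,j}$ only, while $Z_i$ is a function of $\{M_{i,j'}:j'\ne j\}$, and the timelines of distinct user pairs are assumed independent, so the two indicators are independent and the expectation factors into $\Pr[\mathcal{A}_i]\cdot\Pr[Z_i\le K_f-a-1]$. Substituting back $a=|Y\cap S_{j,f}|$ and the definition of $Z_i$, and then summing over $i\in\mathcal{D}$ with the prefactor $\frac{p_f}{N_uK_f}$, gives exactly (\ref{gain}). The step requiring the most care is the reduction of the nested $\min$'s: one must notice that the very event $\mathcal{A}_i$ that forces the inner truncation to increase also pins $\min(B_{i,j}M_{i,j},a)$ to the value $a$, which is precisely what lets the outer difference collapse to the single indicator $\mathbbm{1}[Z_i\le K_f-a-1]$ whose threshold no longer depends on $M_{i,j}$; overlooking this would leave a coupled expression that does not separate into the product appearing in (\ref{gain}).
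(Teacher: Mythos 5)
Your proof is correct, and it reaches (\ref{gain}) by a genuinely different and noticeably cleaner route than the paper's Appendix B. The paper expands each expectation as $\sum_{q=0}^{K_f-1} q\Pr[\,\cdot=q\,]+K_f\Pr[\,\cdot\ge K_f\,]$, splits the gain into two pieces, conditions on the value of the $j$-th summand, and is then forced into a case analysis according to whether $B_{i,j}\bigl(\bigl\lceil (|Y\cap S_{j,f}|+1)/B_{i,j}\bigr\rceil-1\bigr)$ equals $|Y\cap S_{j,f}|$ or falls short of it (i.e.\ whether $|Y\cap S_{j,f}|$ is actually attainable by $B_{i,j}M_{i,j}$), before re-summing over $q$ to collect everything into cdf terms. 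You instead work at the level of the random variables themselves: the pointwise identity $\min(x,a+1)-\min(x,a)=\mathbbm{1}[x\ge a+1]$ localizes the entire change to the event $\mathcal{A}_i=\{M_{i,j}\ge\lceil(a+1)/B_{i,j}\rceil\}$, on which $\min(B_{i,j}M_{i,j},a)$ is pinned to $a$, so the outer truncation difference collapses to $\mathbbm{1}[Z_i\le K_f-a-1]$ and the pairwise-independence assumption factorizes the expectation. This eliminates the divisibility case split and the summation-by-parts manipulation entirely, at the cost of nothing; the only points that need care — and which you handle — are that $B_{i,j}M_{i,j}$ is integer-valued (so the indicator identity for the inner $\min$ is exact) and that $Z_i$ depends only on $\{M_{i,j'}:j'\ne j\}$, which is independent of $M_{i,j}$ by the model assumption (the degenerate case $j=i$, where $M_{i,i}=\infty$ makes $\mathbbm{1}_{\mathcal{A}_i}\equiv 1$, goes through as well). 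Both arguments yield the identical closed form; yours would be the preferable one to present.
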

\begin{proof}
See Appendix B.
\end{proof}
Lemma \ref{lem_sub} verifies that  $g(Y)$ satisfies Proposition 1, and thus it is a monotone submodular function.
\begin{lem} \label{lem_sub}
Let $ A \subset D \subset S \text{ and } y_{j,f,k} \in S-D$, then,
\begin{align}
g(A \cup \{y_{j,f,k}\}) - g(A) \geqslant g(D \cup \{y_{j,f,k}\}) - g(D) \geqslant 0
\end{align}
\end{lem}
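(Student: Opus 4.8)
The plan is to work directly from the closed-form expression for the marginal gain established in Lemma \ref{lem_gain}. For a set $Y\subset S$ with $y_{j,f,k}\in S-Y$, write $\Delta(Y):=g(Y\cup\{y_{j,f,k}\})-g(Y)$, and observe from \eqref{gain} that $\Delta(Y)$ depends on $Y$ only through the occupancy numbers $x_{j',f}(Y)=|Y\cap S_{j',f}|$, $j'\in\mathcal{D}$. Concretely, $\Delta(Y)=\frac{p_f}{N_uK_f}\sum_{i\in\mathcal{D}}\alpha_i(Y)\,\beta_i(Y)$, where $\alpha_i(Y)=\Pr[M_{i,j}\ge \lceil (x_{j,f}(Y)+1)/B_{i,j}\rceil]$ and $\beta_i(Y)=\Pr[Z_Y^{(i)}\le K_f-x_{j,f}(Y)-1]$ with $Z_Y^{(i)}:=\sum_{j'\ne j}\min(B_{i,j'}M_{i,j'},x_{j',f}(Y))$. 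Since $\alpha_i(Y),\beta_i(Y)\in[0,1]$ and $p_f,K_f,N_u>0$, this already gives $\Delta(Y)\ge 0$, which is the right-hand inequality of the lemma. It remains to prove $\Delta(A)\ge\Delta(D)$ for $A\subset D\subset S$ with $y_{j,f,k}\in S-D$, i.e. that enlarging the cached set shrinks the marginal gain.

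Because $A\subseteq D$, we have $x_{j',f}(A)\le x_{j',f}(D)$ for every $j'$. The plan is to compare the two factors term by term for each $i$. For $\alpha_i$: the map $t\mapsto\lceil (t+1)/B_{i,j}\rceil$ is non-decreasing and $m\mapsto\Pr[M_{i,j}\ge m]$ is non-increasing, so $x_{j,f}(A)\le x_{j,f}(D)$ yields $\alpha_i(A)\ge\alpha_i(D)\ge 0$. For $\beta_i$: fix a realization of the independent Poisson variables $\{M_{i,j'}\}_{j'\ne j}$, which do not depend on the caching placement. Since $a\mapsto\min(B_{i,j'}M_{i,j'},a)$ is non-decreasing, the sample-path bound $\min(B_{i,j'}M_{i,j'},x_{j',f}(A))\le\min(B_{i,j'}M_{i,j'},x_{j',f}(D))$ holds for every $j'\ne j$, hence $Z_A^{(i)}\le Z_D^{(i)}$ pointwise; at the same time the threshold satisfies $K_f-x_{j,f}(A)-1\ge K_f-x_{j,f}(D)-1\ge 0$, the last inequality because $y_{j,f,k}\notin D$ forces $x_{j,f}(D)\le K_f-1$. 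Consequently the event $\{Z_D^{(i)}\le K_f-x_{j,f}(D)-1\}$ is contained in $\{Z_A^{(i)}\le K_f-x_{j,f}(A)-1\}$, so $\beta_i(A)\ge\beta_i(D)\ge 0$.

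Combining the two estimates, for each $i\in\mathcal{D}$ both factors are non-negative and no smaller for $A$ than for $D$, so $\alpha_i(A)\beta_i(A)\ge\alpha_i(D)\beta_i(D)\ge 0$; summing over $i$ and multiplying by $\frac{p_f}{N_uK_f}>0$ gives $\Delta(A)\ge\Delta(D)\ge 0$, which is exactly the assertion of the lemma (and hence $g$ satisfies Proposition 1). The monotonicity of $g$ (property 1 of Definition 1) then follows as a corollary by telescoping a chain of single-element additions from $A$ to $D$. The main obstacle is the $\beta_i$ step: enlarging $Y$ simultaneously increases (on every sample path, in particular stochastically) the interference-like sum $Z_Y^{(i)}$ and decreases the comparison threshold $K_f-x_{j,f}(Y)-1$, and both effects must be pushed in the same direction at once; the clean way around this, as above, is the coupling/sample-path argument that converts the comparison of two probabilities into an inclusion of events on a common probability space.
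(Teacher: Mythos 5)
Your proof is correct and follows essentially the same route as the paper's: both start from the closed form of the marginal gain in Lemma \ref{lem_gain}, compare the two probability factors termwise in $i$, and handle the second factor by a pointwise comparison of the sum and the threshold (the paper phrases this via the nonnegative differences $U_{i,j'}$ absorbed into the threshold, which is exactly your event-inclusion/coupling step). No gaps; your explicit sample-path formulation is, if anything, a cleaner write-up of the same argument.
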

\begin{proof}
See Appendix C.
\end{proof}
Finally, the constraint in the original problem is proved to be a matroid constraint in Lemma \ref{lem_mat}.
\begin{lem} \label{lem_mat}
Let $S_i$, where $i \in \mathcal{D}$, denote all the segments that may be stored at user $i$, which is $S_i=\{y_{i,f,k}| f \in \mathcal{F} \text{ and } 1 \leqslant k \leqslant K_f \}$. Then, constraints (\ref{problem}a) and (\ref{problem}b) can be rewritten as $Y \in \mathcal{I}$, where
\begin{equation} \label{I}
\mathcal{I}=\left\{Y \subseteq S \big| |Y \cap S_i| \leqslant C, \forall i \in \mathcal{D} \right\},
\end{equation}
which is a matroid constraint.
\end{lem}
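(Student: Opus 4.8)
The plan is to recognize $\mathcal{M}=\{S,\mathcal{I}\}$ as a \emph{partition matroid} and then verify the axioms of Definition 3 directly. The key structural observation is that $\{S_i\}_{i\in\mathcal{D}}$ is a partition of the ground set $S$: every element $y_{j,f,k}$ lies in $S_j$ and in no other $S_i$, and $\bigcup_{i\in\mathcal{D}} S_i = S$. Moreover, for each fixed $j$ the sets $S_{j,f}$, $f\in\mathcal{F}$, are pairwise disjoint with union $S_j$, so that $|Y\cap S_j| = \sum_{f\in\mathcal{F}}|Y\cap S_{j,f}| = \sum_{f\in\mathcal{F}} x_{j,f}$ for every $Y\subseteq S$.

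First I would establish the claimed equivalence of the constraints. From the identity just noted, constraint (\ref{problem}a), $\sum_{f\in\mathcal{F}} x_{j,f}\leqslant C$, is exactly $|Y\cap S_j|\leqslant C$ for all $j\in\mathcal{D}$, i.e.\ $Y\in\mathcal{I}$. Constraint (\ref{problem}b), $x_{j,f}\in\mathbb{N}$, holds automatically because $x_{j,f}=|Y\cap S_{j,f}|$ is by construction a non-negative integer for any $Y\subseteq S$; similarly (\ref{problem}c), $x_{j,f}\leqslant K_f$, is automatic since $|S_{j,f}|=K_f$ forces $|Y\cap S_{j,f}|\leqslant K_f$. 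Hence the feasible region of problem (\ref{problem}) coincides with $\mathcal{I}$.

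Next I would check the three matroid axioms. Axiom (1): $\varnothing\in\mathcal{I}$ since $|\varnothing\cap S_i|=0\leqslant C$. Axiom (2): if $D\in\mathcal{I}$ and $A\subseteq D$, then $|A\cap S_i|\leqslant |D\cap S_i|\leqslant C$ for every $i$, so $A\in\mathcal{I}$. Axiom (3), the exchange property: suppose $A,D\in\mathcal{I}$ with $|A|<|D|$. Since $|A|=\sum_{i\in\mathcal{D}}|A\cap S_i|$ and $|D|=\sum_{i\in\mathcal{D}}|D\cap S_i|$, there must exist an index $i^\star$ with $|A\cap S_{i^\star}|<|D\cap S_{i^\star}|$. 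Then $(D\setminus A)\cap S_{i^\star}$ is non-empty; choosing any $j$ in it gives $|(A\cup\{j\})\cap S_{i^\star}| = |A\cap S_{i^\star}|+1 \leqslant |D\cap S_{i^\star}| \leqslant C$, while $|(A\cup\{j\})\cap S_i| = |A\cap S_i|\leqslant C$ for $i\ne i^\star$. Therefore $A\cup\{j\}\in\mathcal{I}$, as required.

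There is no genuine obstacle here: this is the standard fact that a capacitated partition of the ground set induces a matroid, and the only step requiring a moment's care is the counting argument in the exchange axiom, namely that $|A|<|D|$ forces a strict inequality $|A\cap S_{i^\star}|<|D\cap S_{i^\star}|$ on at least one block of the partition. Everything else is bookkeeping with cardinalities of intersections, and the equivalence-of-constraints part follows immediately from the disjointness of the $S_{j,f}$ within each $S_j$.
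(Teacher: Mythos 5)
Your proof is correct and takes the same route as the paper: the paper's proof simply observes that $\mathcal{M}=(S,\mathcal{I})$ is a partition matroid and cites the standard reference, whereas you additionally spell out the constraint equivalence and verify the three axioms (including the counting argument for the exchange property), all of which is sound.
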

\begin{proof}
The tuple $\mathcal{M}=(S,\mathcal{I})$ belongs to a typical matroid, called the partition matroid \cite{submodular2}.
%\begin{enumerate}
%  \item Since $|\varnothing \cap S_i|=0$, for all $i \in \mathcal{D}$, we can get $\varnothing \in \mathcal{I}$,
%  \item If $B \in \mathcal{I}$ and $A \subseteq B$, then $|A \cap S_i| \leqslant |B \cap S_i| \leqslant C$, for all $i \in \mathcal{D}$. Thus, $A \in \mathcal{I}$,
%  \item If $A,B \in \mathcal{I}$ and $|A|<|B|$, then there exists $j \in B-A$ such that $A \cup \{j\} \in \mathcal{I}$.
%\end{enumerate}
\end{proof}
To sum up, problem (\ref{problem}) can be reformulated as
\begin{align}
\max \limits_{Y \in \mathcal{I}} \quad & \frac{1}{N_u}\sum \limits_{i \in \mathcal{D}}\sum \limits_{f \in \mathcal{F}} \frac{p_f}{K_f} \left\{ \mathbb{E}\left[\min  \left( \sum_{j \in \mathcal{D}} \min (B_{i,j}M_{i,j},|Y \cap S_{j,f}|),K_f \right) \right]\right\} \label{problem_m}
\end{align}

which is a monotone submodular maximization problem over a matroid constraint.

\subsection{Greedy Algorithm for Mobility-Aware Caching}
For problem (\ref{problem_m}), the greedy algorithm provides an effective solution, and has been proved to archive a $\frac{1}{2}$-approximation \cite{submodular2}, i.e., the worst case is at least $50\%$ of the optimal solution. Moreover, it has been observed to provide close-to-optimal performance in lots of cases. A randomized algorithm was proposed in \cite{1e}, which achieves a higher approximation ratio. However, with the size of the ground set as $N_u \times N_{file} \times K_{\max}$, the randomized algorithm is computationally inapplicable. In the simulation, we will show that the greedy algorithm performs very close to the optimal caching strategy.

The greedy algorithm firstly sets the caching placement $Y$ as an empty set, and then, it adds an element according to the priority values while satisfying the matroid constraint. The process continues until no more element can be added. The priority value of an element $y_{j,f,k}$, denoted as $g^d_{j,f,k}$, is defined as the gain of adding $y_{j,f,k}$ into the caching placement $Y$, given in (\ref{gain}), when $Y \in \mathcal{I}$, $y_{j,f,k} \in S-Y$ and $Y \cup \{y_{j,f,k}\} \in \mathcal{I}$.

The difficulty in the greedy algorithm is to update the priority value efficiently. To get the priority value given in (\ref{gain}), two kinds of probabilities need to be calculated, i.e.,
\begin{align}
p^1_{i,j,f}=&\Pr \bigg[ M_{i,j} \geqslant \left\lceil \frac{|Y \cap S_{j,f}|+1}{B_{i,j}}\right\rceil \bigg],\\
p^2_{i,j,f}=&\Pr \bigg[\sum_{j' \in \mathcal{D}, j' \ne j} \min (B_{i,j'}M_{i,j'}, |Y \cap S_{j',f}|)   \le K_f- |Y \cap S_{j,f}|-1 \bigg].
\end{align}
Since $M_{i,j}$ follows a Poisson distribution with mean $\lambda_{i,j}T^d$, $p^1_{i,j,f}$ can be calculated as
\begin{equation}
p^1_{i,j,f}=
1-\frac{\Gamma \left( \left\lceil \frac{|Y \cap S_{j,f}|+1}{B_{i,j}}\right\rceil, \lambda_{i,j} T^d \right) }{\left( \left\lceil \frac{|Y \cap S_{j,f}|+1}{B_{i,j}}\right\rceil-1 \right)!}.
\end{equation}
Using the divide and conquer algorithm, similar as the one proposed in Section II-D, we can get the value of $p^2_{i,j,f}$ efficiently.

\begin{algorithm}[!h]
\caption{The Greedy Algorithm}
\label{alg:greedy}
\begin{algorithmic}[1]
\STATE {Set $Y=\varnothing \Leftrightarrow \text{ set }x_{j,f}=0, \forall j \in \mathcal{D} \text{ and } f \in \mathcal{F}$.}
\STATE {$S^r=S$.}
\STATE {Initialize the priority values $\{ g^d_{j,f,k}| j \in \mathcal{D}, f \in \mathcal{F}, 1 \leqslant k \leqslant K_f \}$.}
\WHILE{$|Y| < N_u \times C $}
\STATE {$y_{j^\star,f^\star,k^ \star}=\arg \max \limits_{y_{j,f,k} \in S^r}  g^d_{j,f,k}$.}
\STATE {Set $Y=Y \cup \{y_{j^\star,f^\star, k^\star}\} \Leftrightarrow \text{ add } x_{j^\star,f^\star}$ by $1$.}
\STATE {$S^r=S^r-\{y_{j^\star,f^\star,k^\star}\}$.}
\IF {$|Y \cap S_{j^\star}|=C$}
\STATE {$S^r=S^r-S_{j^\star}$}
\ENDIF
\STATE {Update the priority values $\{ g^d_{j,f^\star,k} (Y)| j \in \mathcal{D}, 1 \leqslant k \leqslant K_{f^{\star}}, y_{j,f^\star,k} \in S^r \}$ based on Algorithm \ref{alg:dc}.}
\ENDWHILE
\end{algorithmic}
\end{algorithm}
In summary, the greedy algorithm is described in Algorithm \ref{alg:greedy}, where $S^r$ denotes the remaining set, including the elements that may be added into the caching placement $Y$. When the element $\{y_{j^\star,f^\star,k^\star}\}$ is added to $Y$, it should be deleted from $S^r$. When user $j^\star$ has already stored $C$ segments, no more segment can be stored in user $j^\star$, and thus, all the elements in $S_{j^\star}$ should be deleted from $S^r$. Moreover, after adding a new element $\{y_{j^\star,f^\star, k^\star} \}$ to $Y$, the priority values should be updated. According to (\ref{gain}), adding the element $\{y_{j^\star,f^\star, k^\star}\}$ only affects the priority values of elements $\{ y_{j,f^\star,k}| j \in \mathcal{D}, 1 \leqslant k \leqslant K_{f^{\star}}, y_{j,f^\star,k} \in S^r \}$, and thus, only the corresponding priority values need to be updated.

\subsection{Complexity Analysis}
According to Algorithm \ref{alg:greedy}, the computation complexity of the greedy algorithm is
\begin{equation}
\mathcal{O}\left(N_uC(T_p+T_m)\right),
\end{equation}
where $\mathcal{O}(T_p)$ and $\mathcal{O}(T_m)$ denote the computation complexity for updating the priority values and searching for the element $y_{j^\star,f^\star,k^\star}$ with the maximum priority value in one iteration. From Eq. (\ref{gain}), the priority value $g^d_{j,f,k} (Y)$ is irrelevant with the fact whether $y_{j,f^\star,k}$ is in $Y$ when $f \ne f^\star$, and the gain of adding different segments of the same file to the cache of the same user remains the same. Thus, in each iteration, at most $N_u$ priority values need to be calculated. According to Algorithm \ref{alg:dc}, the complexity for calculating one priority value is $\mathcal{O}(N_u^2  K^2_{\max})$. Then, the computation complexity for updating the priority values in one iteration is $\mathcal{O}(T_p)=\mathcal{O}(N_u^3 K^2_{\max})$. Moreover, to reduce the complexity of finding the element with the maximum priority value, we denote the maximum gain of caching one segment of file $f \in \mathcal{F}$ in one device as
\begin{equation}
max_{f}=\max \limits_{j \in \mathcal{D}, 1 \leqslant k \leqslant K_f, y_{j,f,k} \in S^r} g^d_{j,f,k} (Y).
\end{equation}
Since only priority values $g^d_{j,f^\star,k} (Y)$, where $j \in \mathcal{D}$ and $1 \leqslant k \leqslant K_{f^{\star}}$, may be updated in each iteration, we can store the maximum values $max_{f}$ for all $f \in \mathcal{F}$, and update $max_{f^\star}$ in each iteration to reduce the complexity for searching the element $y_{j^\star,f^\star,k^\star}$ with the maximum priority value from $\mathcal{O}(N_uN_{file})$ to $\mathcal{O}(N_u+N_{file})$. Thus, ignoring lower order terms, the overall computation complexity of the greedy algorithm is
\begin{equation}
\mathcal{O}\left(N_uC(N_u^3 K^2_{\max}+N_{file})\right).
\end{equation}

\section{simulation results}
In this section, simulation results will be provided to evaluate the performance of the proposed mobility-aware caching strategy. Moreover, we will validate the mobility model and the effectiveness of the greedy caching strategy based on real-life data-sets, which are available at CRAWDAD \cite{cambridge-haggle-20090529}. Four caching strategies are compared:
\begin{enumerate}
  \item{Optimal mobility-aware caching strategy:} the optimal solution of problem (\ref{problem}) using the DP algorithm in Section III.
  \item{Greedy mobility-aware caching strategy:} the sub-optimal solution of problem (\ref{problem}) using the greedy algorithm in Section IV.
  \item{Random caching strategy:} the random caching strategy \cite{randomcache}, where the probabilities of each user to cache segments of different files are proportional to the file request probabilities.
  \item{Popular caching strategy:} each user device stores $C$ segments of the most popular files \cite{pupolarcaching}.
  %\item{IADOA strategy:} the caching strategy proposed in \cite{D2Dmobility}.
\end{enumerate}
In the following, we assume that the file request probability follows a Zipf distribution with parameter $\gamma_r$, i.e., $p_f=\frac{f^{-\gamma_r}}{\sum \limits_{i \in \mathcal{F}} i^{-\gamma_r}}$ \cite{d2d-cache,femtocaching2,scaling}. The number of encoded segments to recover each file, i.e., $K_f$, is randomly selected in $[1,K_{\max}]$.
\subsection{Performance Evaluation}

\begin{figure}
\begin{minipage}[t]{0.31\linewidth}
\centering
\includegraphics[width=2.05in]{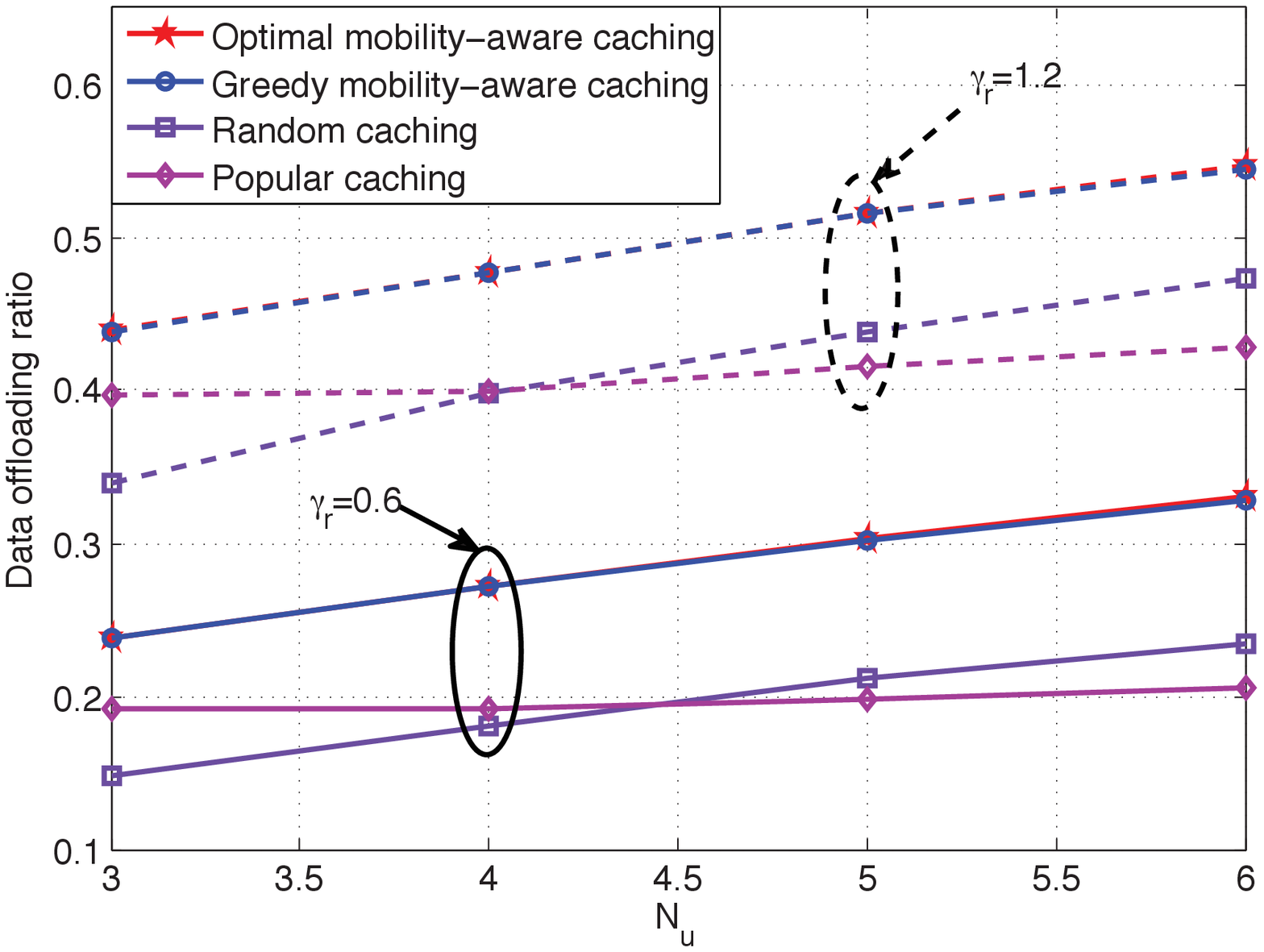}
  \caption{Comparison of different caching strategies with $N_{file}=20$, $T^d=120 \text{ s}$, $B_{i,j}=1$, $K_{\max}=3$ and $C=3$.}
\label{fig_scheme1}
\end{minipage}%
\hspace{0.1in}
\begin{minipage}[t]{0.31\linewidth}
\centering
  \includegraphics[width=2.05in]{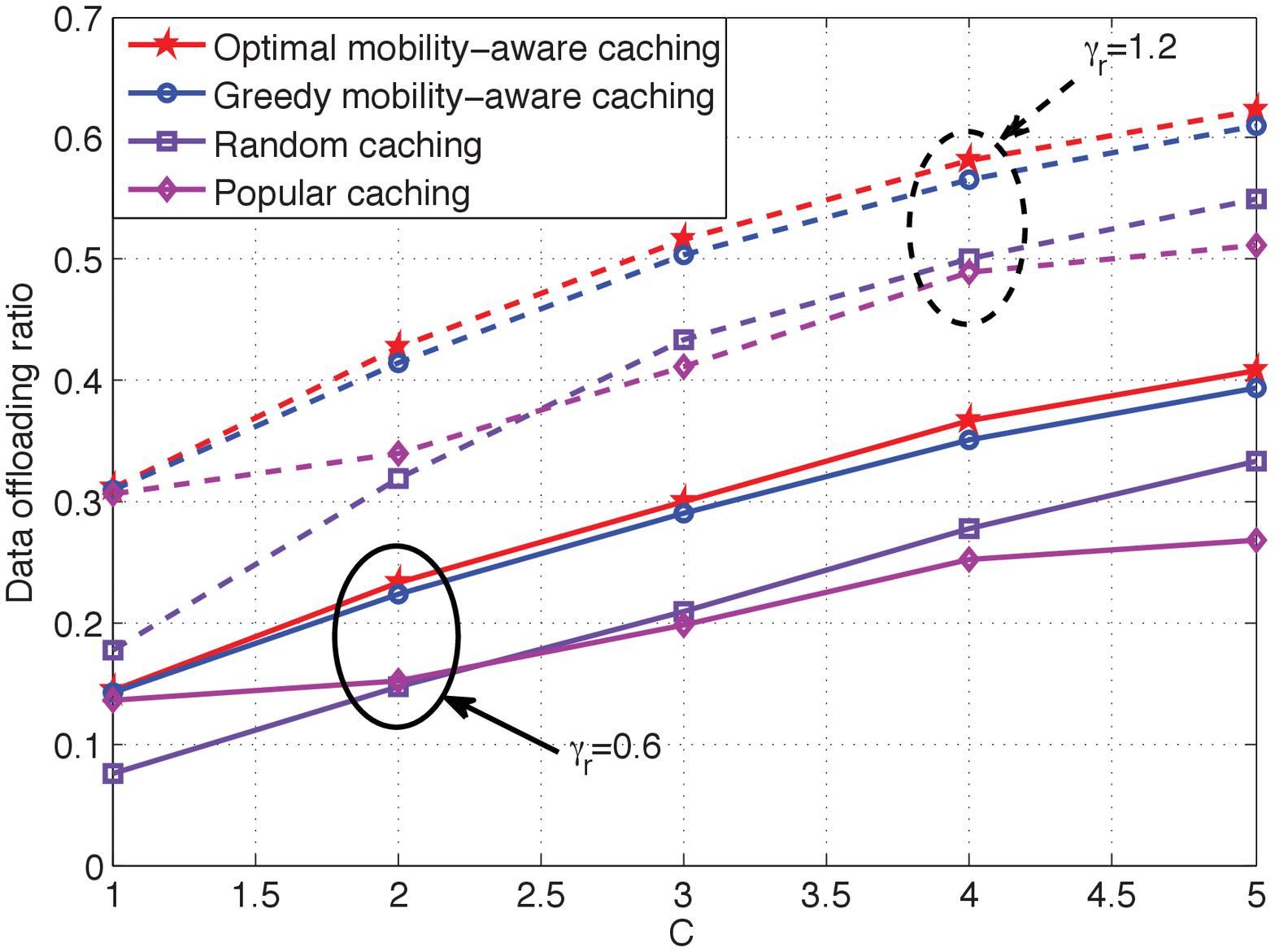}
  \caption{Comparison of different caching strategies with $N_u=5$, $N_{file}=20$, $T^d=120 \text{ s}$, $B_{i,j}=2$ and $K_{\max}=3$.}
\label{fig_scheme2}
\end{minipage}%
\hspace{0.1in}
\begin{minipage}[t]{0.31\linewidth}
\centering
  \includegraphics[width=2in]{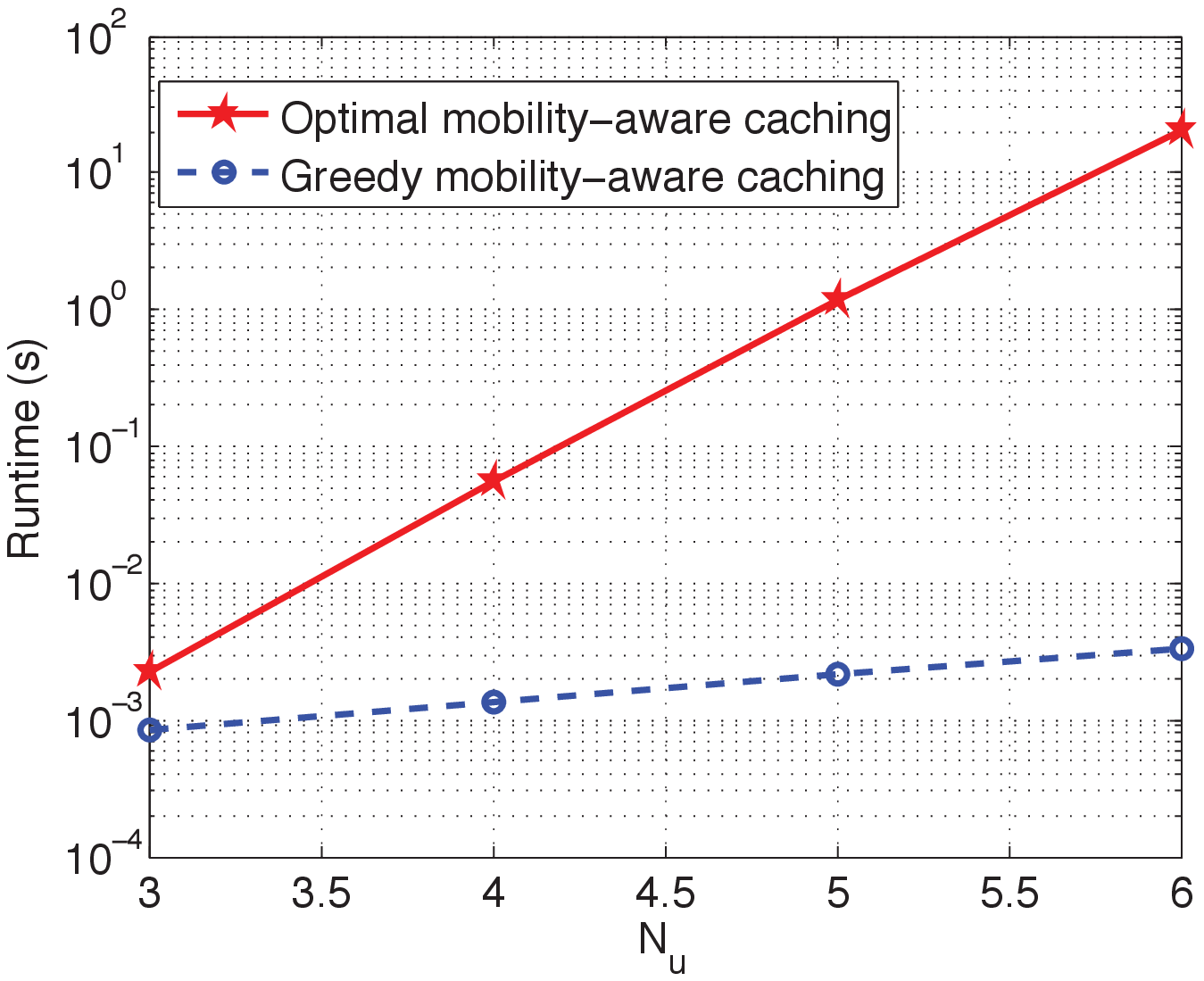}
		\caption{Runtime of the mobility aware caching strategy with $N_{file}=20$, $T^d=120 \text{ s}$, $B_{i,j}=1$, $\gamma_r=0.6$, $K_{\max}=3$ and $C=3$.}
\label{fig_runtime1}
\end{minipage}
\end{figure}

In this part, we generate the contact rate for each device pair, i.e., $\lambda_{i,j}$, for $i \in \mathcal{D}$ and $j \in \mathcal{D}\backslash \{i\}$, according to a Gamma distribution as $\Gamma(4.43,1/1088)$ \cite{aggregate_real}. We first compare the performance of the four caching strategies by varying the number of users in Fig. \ref{fig_scheme1}. Firstly, we can observe that the greedy caching strategy outperforms both the random caching and popular caching strategies, since the later two do not make good use of the user mobility information. Secondly, the performance of the greedy caching algorithm is very close to the optimal one. Finally, the above observations do not change for different parameters of the file request probabilities. Note a large value of $\gamma_r$ means that the requests from mobile users are more concentrated on the popular files.

Fig. \ref{fig_scheme2} shows the effect of the caching capacity $C$. It is demonstrated that the proposed mobility-aware strategy always provides a substantial performance gain over the random caching and popular caching strategies for different caching capacities. In addition, when the caching capacity is very small, caching the segments of the most popular files provides a good performance.

Fig. \ref{fig_runtime1} shows the runtime of the proposed algorithms on a desktop computer with a $3.20$ GHz Intel Core i5 processor and $8$ GB installed memory. Even though the runtime of the optimal algorithm increases exponentially with the number of users, it can be easily applied to a small size network within one minute. Moreover, the runtime of the greedy algorithm is much lower than the optimal algorithm, and it can achieve a near optimal performance as  shown in Fig. \ref{fig_scheme1} and \ref{fig_scheme2}.

%\begin{figure}
%\centering
%\includegraphics[width=3in]{paper3}
%  \caption{Comparison of different caching strategies with $N_u=4$, $N_{file}=50$, $T^d=600 s$, $K_f=4$ and $C=4$.}
%\label{fig_scheme3}
%\end{figure}

%In \cite{mobilityD2Dcaching}, a caching problem similar to our formulation was investigated, but it is considered that each file is either fully stored or not stored at each user device, with the assumption that each completed file can be transmitted within one contact time. However, due to the limited contact time and transmission rate, such assumption is not practical. From Fig. \ref{fig_scheme3}, the perfomance of IADOA strategy in \cite{mobilityD2Dcaching} significantly declines while only part of the file can be transmitted in one contact time, while our proposed caching strategies still have a good performance.

\subsection{Mobility Pattern}

In this part, we will investigate the effect of user mobility pattern on the proposed caching strategy. First, as the inter-contact model adopted in this study cannot directly capture the speed of user mobility, we will connect it with the random waypoint model \cite{rwp}, which can directly reflect user velocity. We will then investigate how user mobility affects the caching performance.

\begin{figure}
\begin{minipage}[t]{0.49\linewidth}
\centering
\includegraphics[width=2.2in]{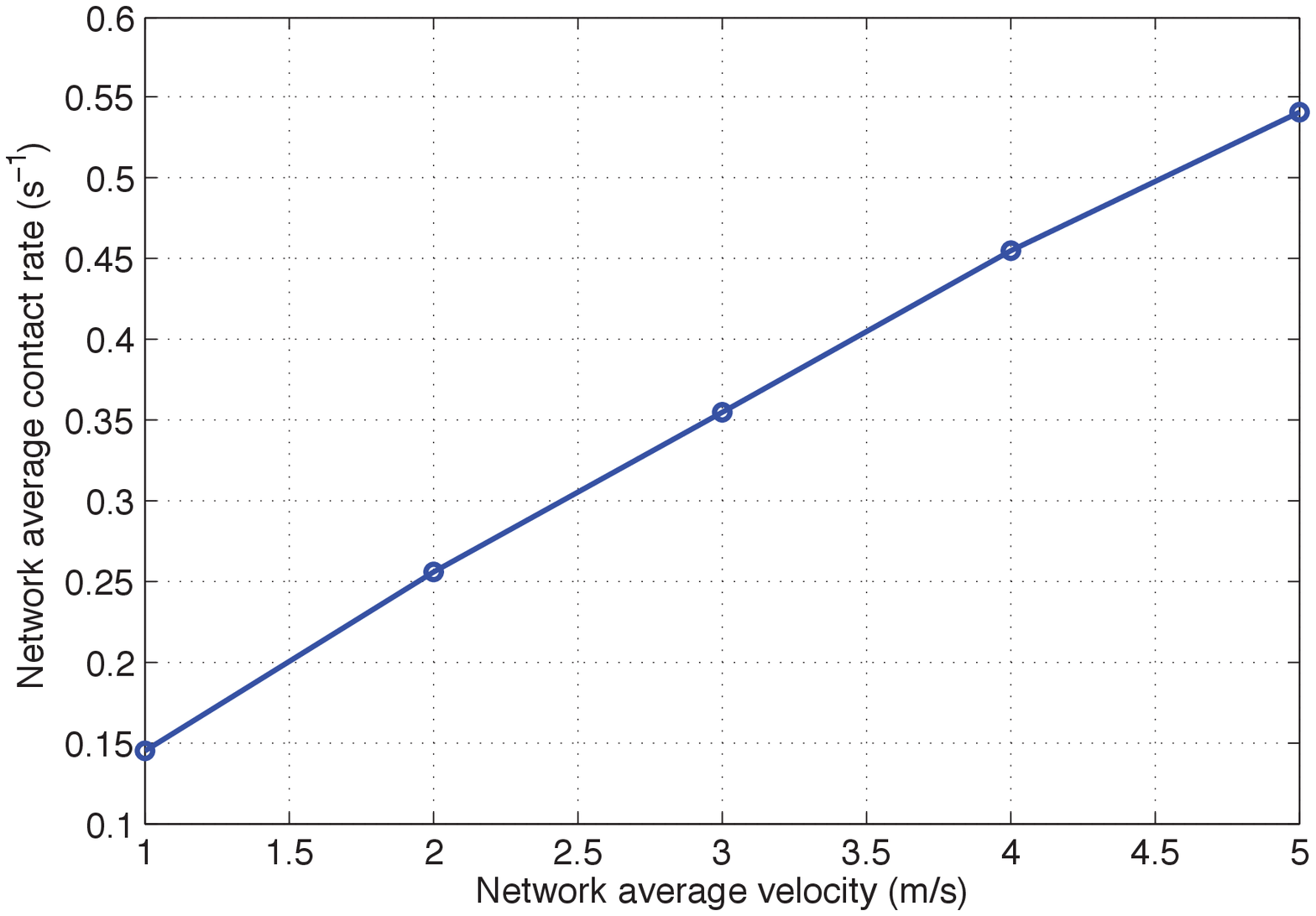}
  \caption{Network average contact rate.}
\label{fig_networkrate}
\end{minipage}%
\hspace{0.1in}
\begin{minipage}[t]{0.49\linewidth}
\centering
  \includegraphics[width=2.2in]{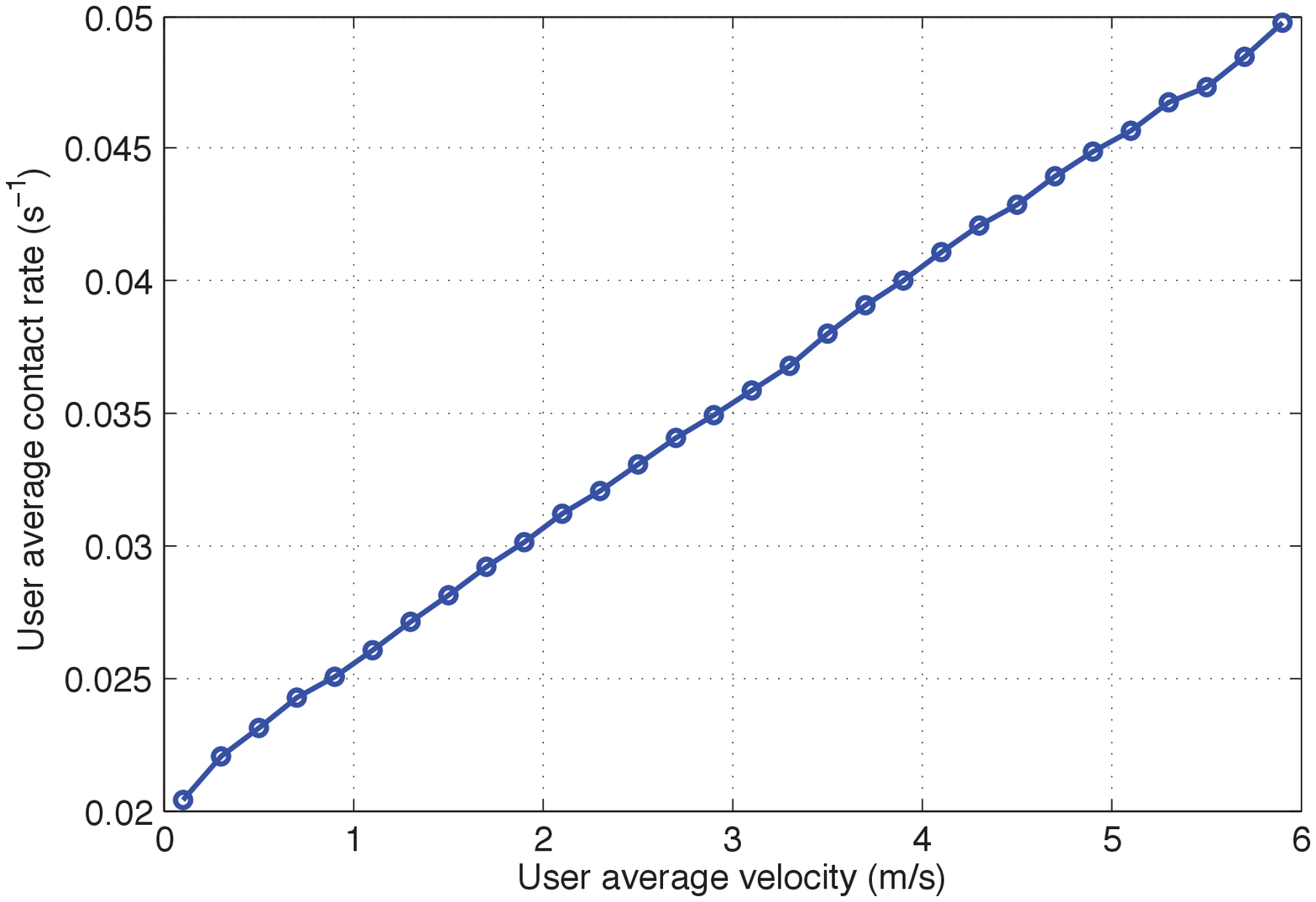}
  \caption{\scriptsize {User average contact rate.}}
\label{fig_userrate}
\end{minipage}
\end{figure}

In the random waypoint model, user $i$ randomly selects a target point in the area and moves towards it with velocity randomly selected in $(0,2 V_i^{a})$. When a user reaches its target, it will randomly select a new target. The user average velocity $V_i^{a}$ is uniformly distributed in $(0, 2 V^n)$, where $V^n$ is the network average velocity. Simulation results of user velocity and contact rates are shown in Fig. \ref{fig_networkrate} and Fig. \ref{fig_userrate}, where $20$ users are moving in a $300$ m $\times$ $300$ m area and the transmission range of each user is $30$ m. Fig. \ref{fig_networkrate} shows how the network average velocity affects the \emph{network average contact rate}, i.e., the average number of contacts in the network per unit time. A linear relationship is observed, i.e., the faster the users move, the more contacts will occur. Fig. \ref{fig_userrate} shows the relationship of the user average velocity and the \emph{user average contact rate}, i.e., the average number of contacts for a particular user per unit time. For this simulation, we use $V^n=3 m/s$. Again, it reveals a linear relationship between the user average velocity and the user average contact rate. Therefore, in the following investigation, the effect of contact rates can be regarded as to be equivalent to the effect of user velocity.

Next, we will show how the user mobility pattern affects the caching performance. The contact rate for each device pair, i.e., $\lambda_{i,j}$, for $i \in \mathcal{D}$ and $j \in \mathcal{D}\backslash \{i\}$, is generated according to the Gamma distribution as $\Gamma(k,\theta)$, which means that the average contact rate for the whole network is $k \theta$ and the variance is $k \theta ^2$. In Fig \ref{fig_scheme4}, we consider the effect of the network average contact rate, while the variance remains the same as $\Gamma(4.43, 1/1088)$ \cite{aggregate_real}. A larger network average contact rate implies a higher frequency of contacts in the network, or equivalently, a higher user velocity. We see that the data offloading ratio increases first linearly, then logarithmically  with the increase of the network average contact rate, with either mobility-aware caching or random caching. When the contact rate is large enough, each user may access the caching storage of all the users in the network, and the data offloading ratio reaches a fixed value.
The data offloading ratio using the popular caching increases slightly, and quickly reaches a fixed value when the network average contact rate increases. For both the random caching strategy and greedy mobility-aware strategy, mobile users cache different contents and utilize the D2D links to exchange data. Thus, both achieve a significant performance gain compared to the popular caching strategy. In addition to utilizing the D2D links, the greedy mobility-aware caching strategy also takes advantage of the user mobility pattern, and thus, it outperforms the random caching strategy.

\begin{figure}
\begin{minipage}[t]{0.49\linewidth}
\centering
  \includegraphics[width=2.2in]{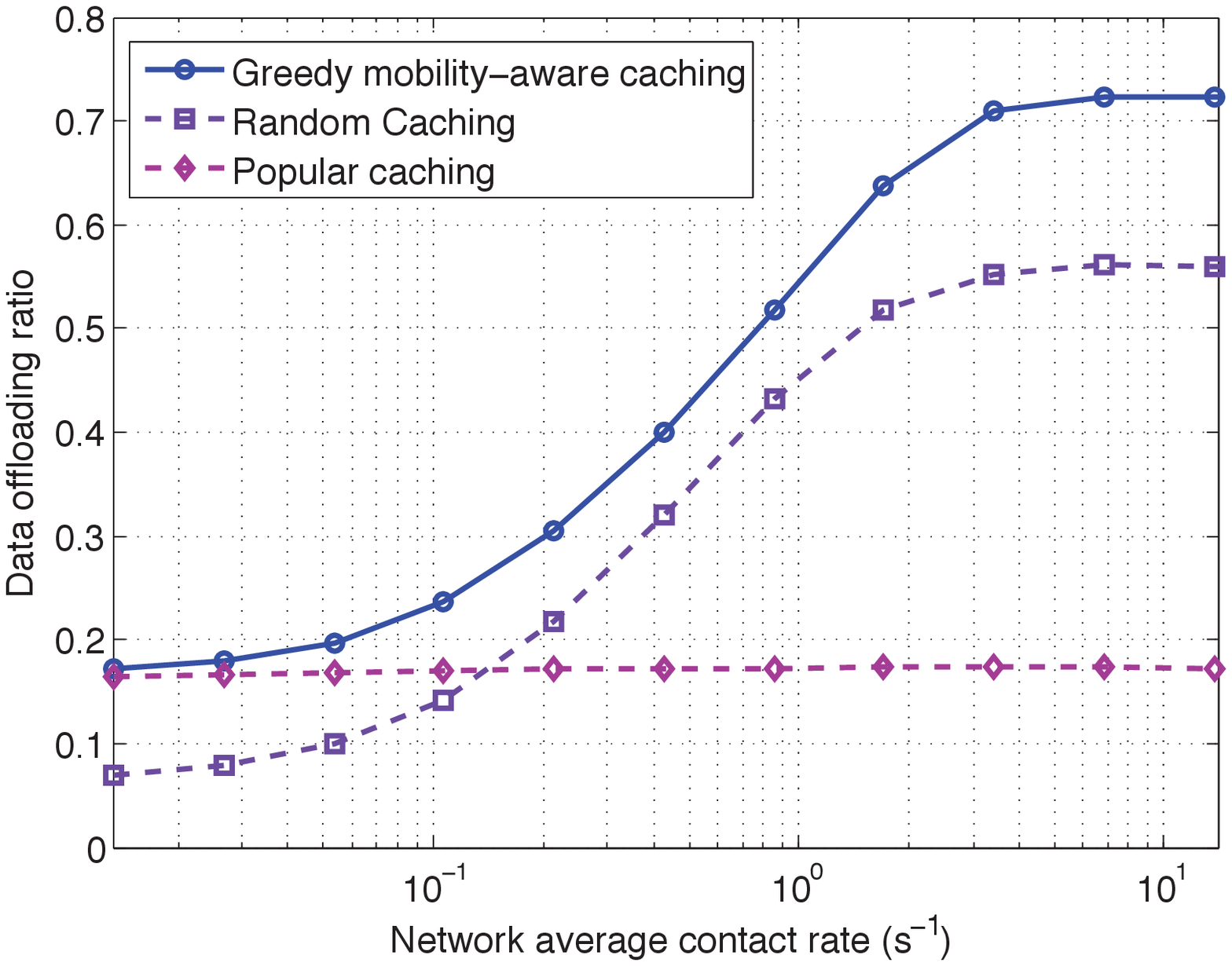}
  \caption{Comparison of different caching strategies with $N_u=15$, $N_{file}=100$, $T^d=120 \text{ s}$, $B_{i,j}=1$, $K_{\max}=3$ and $C=8$.}
\label{fig_scheme4}
\end{minipage}%
\hspace{0.1in}
\begin{minipage}[t]{0.49\linewidth}
\centering
  \includegraphics[width=2.25in]{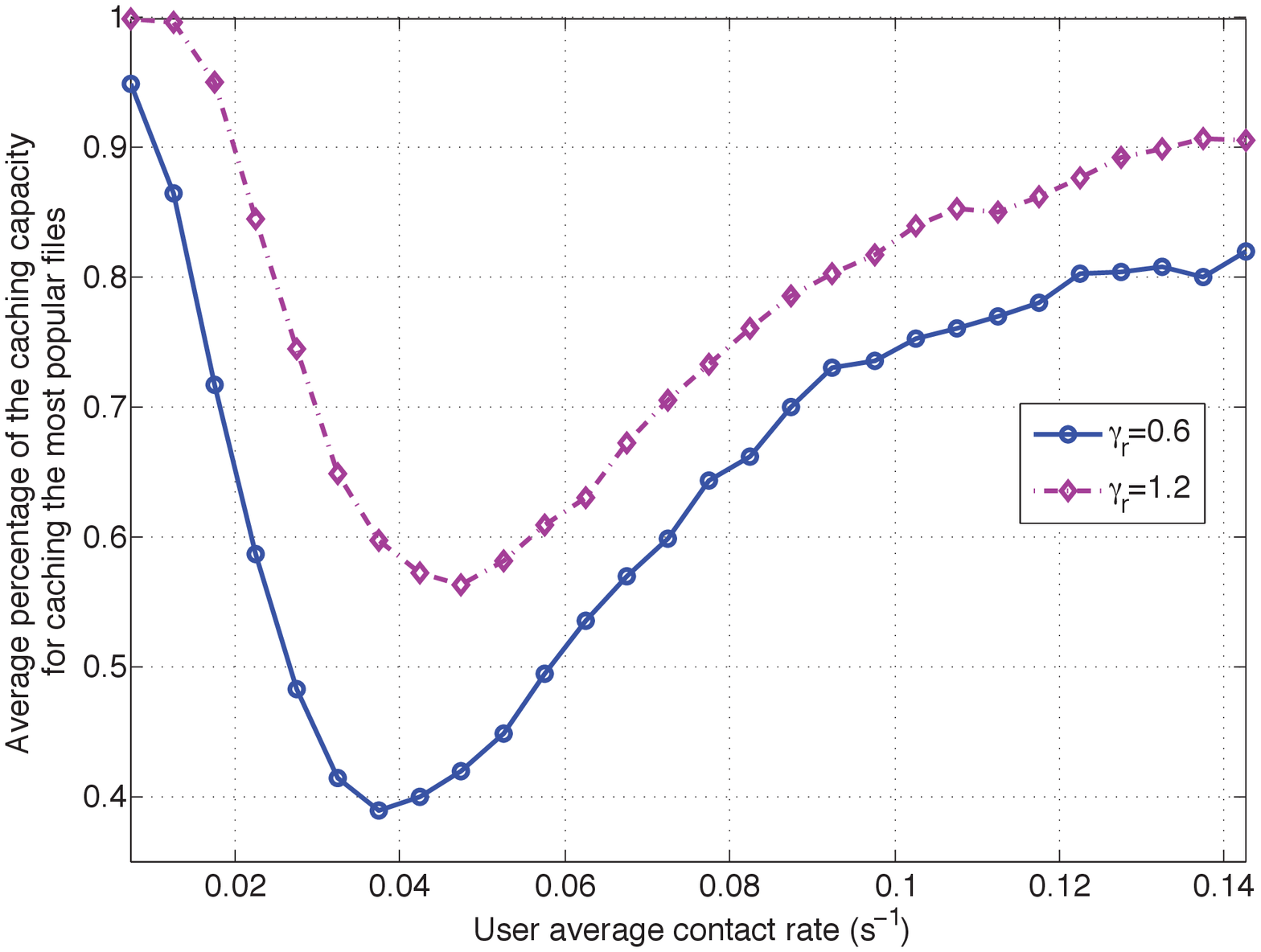}
  \caption{Average percentage of the caching capacity for caching the most popular files when $N_u=15$, $N_{file}=100$, $T^d=120 \text{ s}$, $B_{i,j}=1$, $K_{\max}=3$ and $C=8$. }
  %and $\lambda_{i,j} \sim \Gamma(0.443,1/108.8)$.}
\label{fig_scheme5}
\end{minipage}
\end{figure}

Fig. \ref{fig_scheme5} illustrates how the caching behavior changes with the user average contact rate, which is the rate each user contacts all the other users and is related to the user velocity as shown in Fig. \ref{fig_userrate}. In particular, we will focus on the $C$ most popular files, and investigate the percentage of the caching capacity of each user that is used to cache these most popular files. In this simulation, we first generate the contact rate for each device pair, i.e., $\lambda_{i,j}$, for $i \in \mathcal{D}$ and $j \in \mathcal{D} \backslash \{i\}$, according to the Gamma distribution $\Gamma(0.443,1/108.8)$. Then, the user average contact rate of user $i$ can be calculated as $\lambda_i=\sum\limits_{j\in D\backslash \{i\}}\lambda_{i,j}$. For each realization, caching placement is obtained by applying the greedy mobility-aware caching strategy. Afterwards, for the $i$-th user, the percentage of its caching capacity used to cache the $C$ most popular files, denoted as $\eta_i$, will be recorded. In this way, we get one data point $(\lambda_i, \eta_i)$ for user $i$ in each realization. In total, $5000$ realizations are run. We discretize the user average contact rates, and plot the corresponding average percentage of the caching capacity for caching the most popular files in Fig. \ref{fig_scheme5}. We see that this percentage first decreases and then increases with the user average contact rate going large. Intuitively, when the user contact rate is low, users prefer to cache the most popular files in order to support their own requirements. When the user contact rate is relatively high, users are likely to cache the most popular files as well, but to help other users to download the requested files via D2D links. On the other hand, the users with medium user contact rates need to cache more diversified contents to avoid duplicate caching, as well as to exploit the user contact opportunities and D2D communications.

\subsection{Realistic Data-Sets}

\begin{figure}
\begin{minipage}[t]{0.31\linewidth}
\centering
\includegraphics[width=2in]{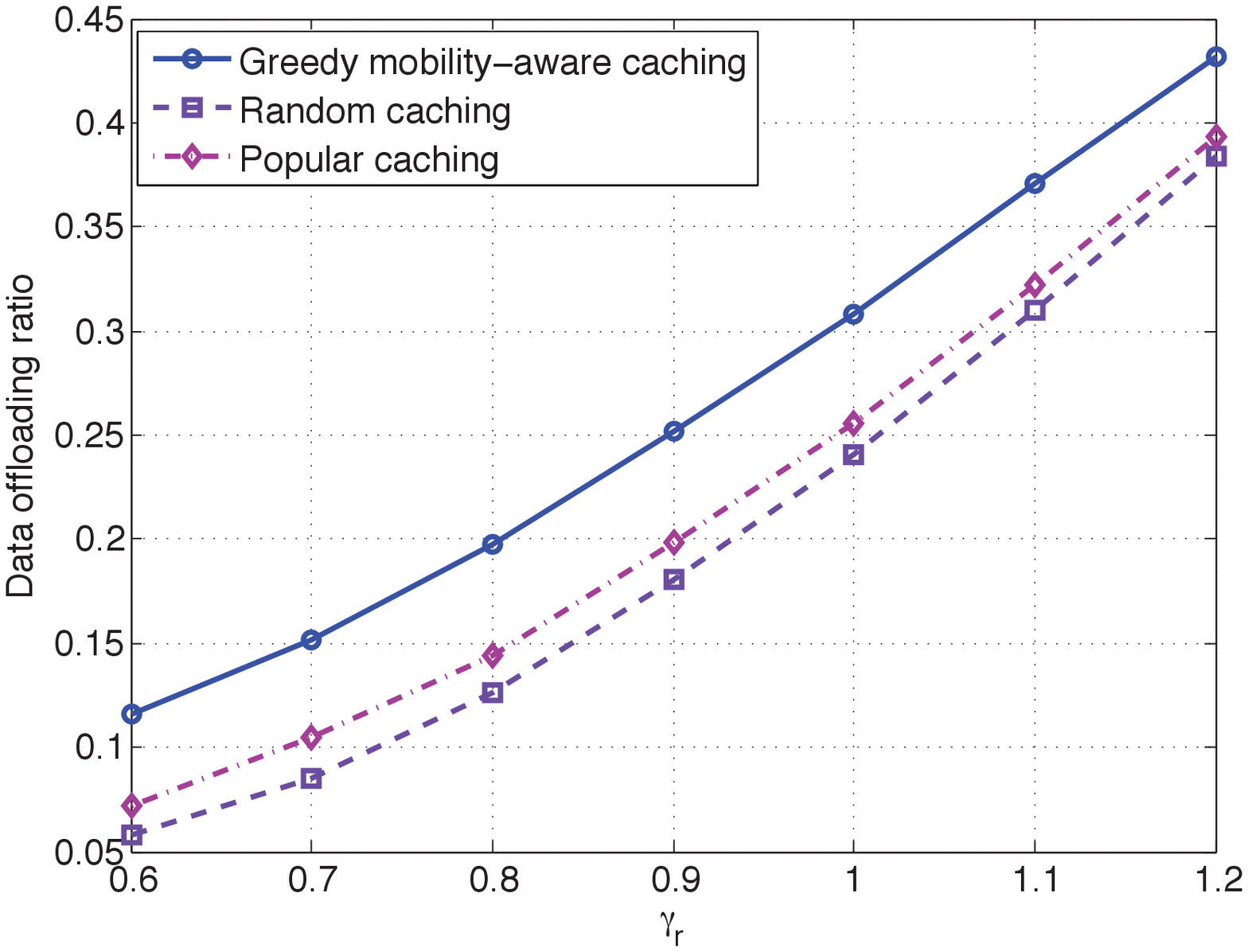}
  \caption{Comparison of different caching strategies with $N_{file}=500$, $T^d=120 \text{ s}$, $B_{i,j}=1$, $K_{\max}=5$ and $C=10$.}
\label{fig_scheme6}
\end{minipage}%
\hspace{0.1in}
\begin{minipage}[t]{0.31\linewidth}
\centering
  \includegraphics[width=1.95in]{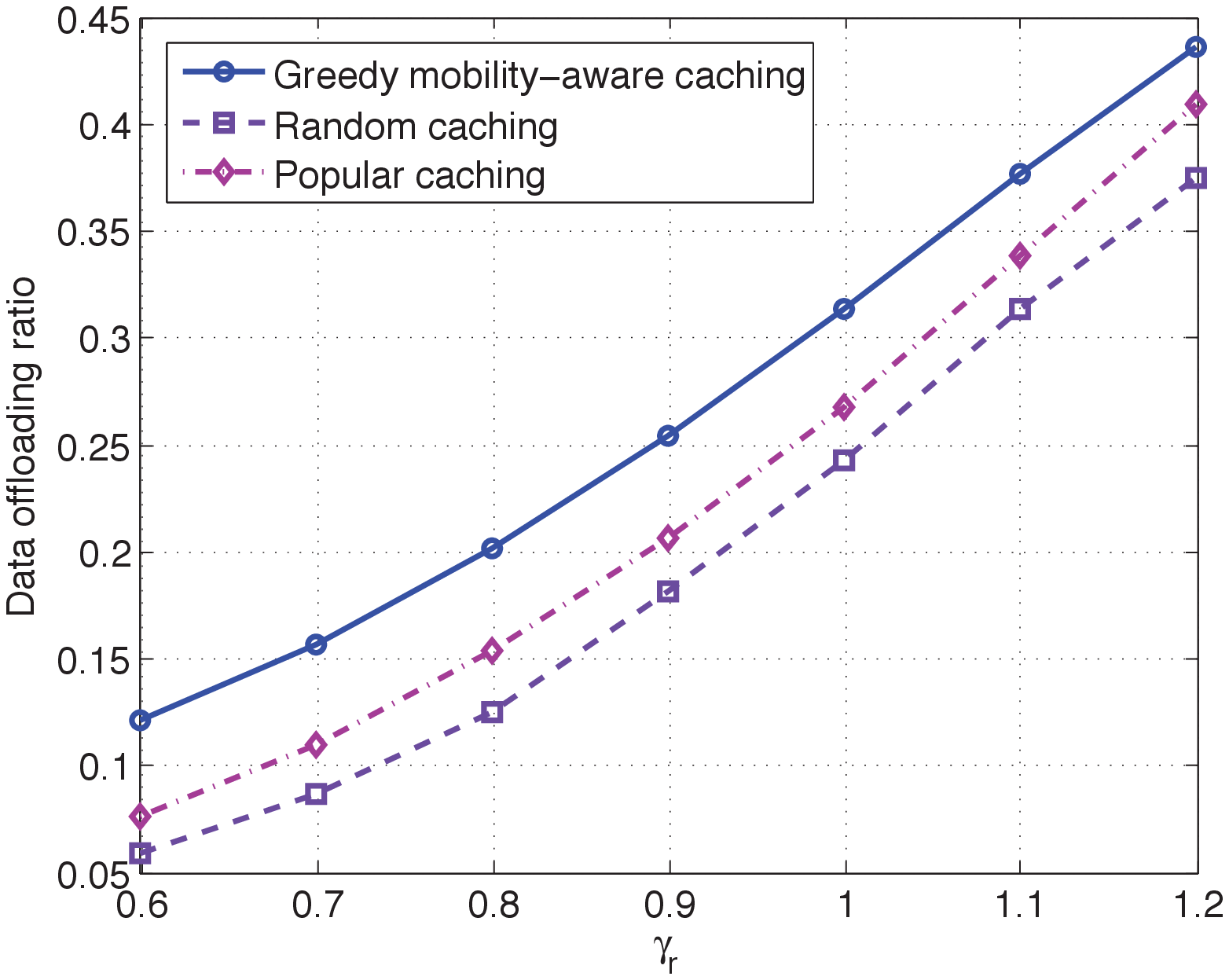}
  \caption{Comparison of different caching strategies with $N_{file}=500$, $T^d=600 \text{ s}$, $B_{i,j}=1$, $K_{\max}=5$ and $C=10$.}
\label{fig_scheme7}
\end{minipage}%
\hspace{0.1in}
\begin{minipage}[t]{0.31\linewidth}
\centering
  \includegraphics[width=2.06in]{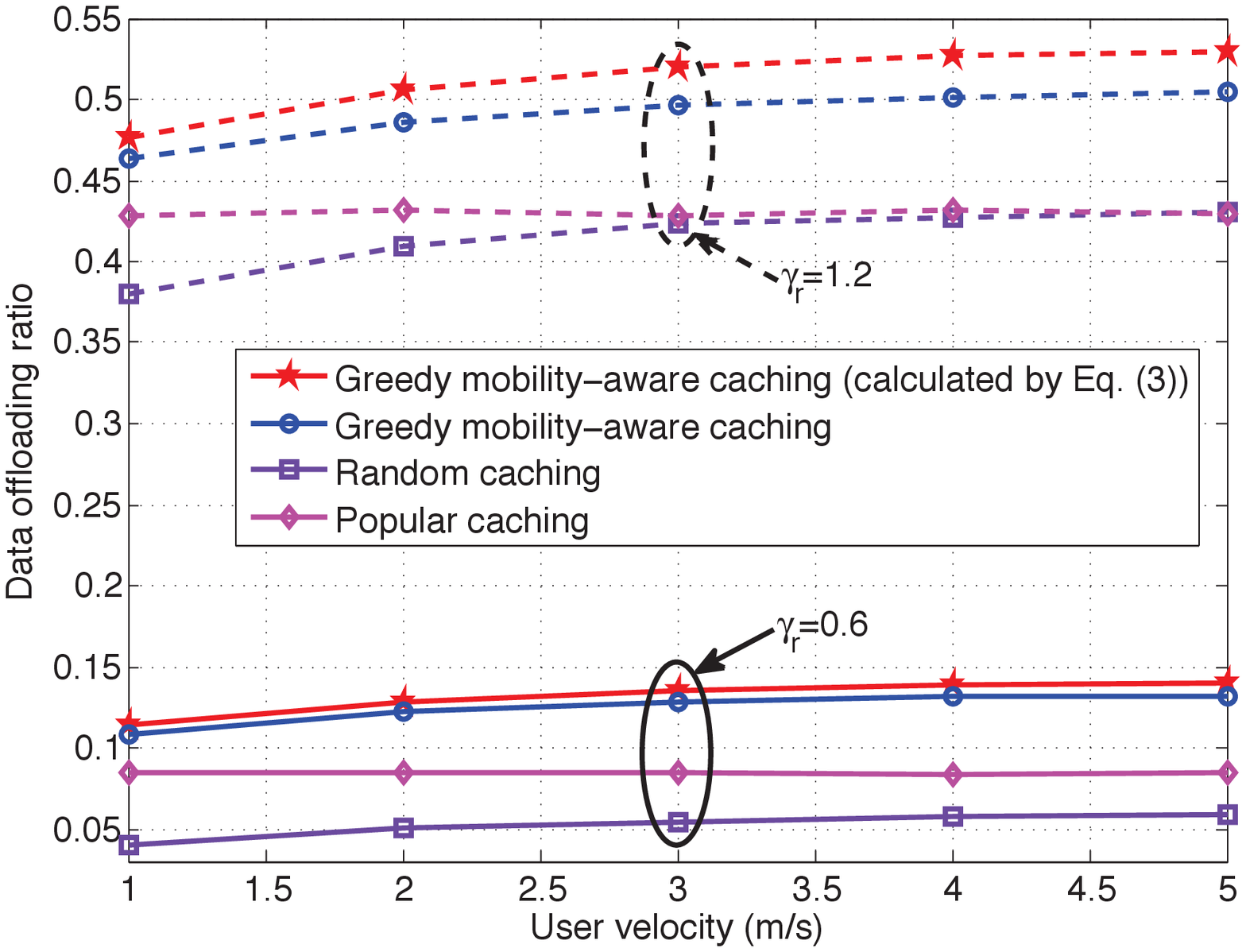}
		\caption{Comparison of different caching strategies with $N_{file}=500$, $T^d=120 \text{ s}$, $B_{i,j}=1$, $K_{\max}=3$ and $C=8$.}
\label{fig_SLAW}
\end{minipage}
\end{figure}

To verify the effectiveness of the proposed mobility-aware caching strategy, we utilize the data set conducted by Chaintreau \emph{et al.} in the INFOCOM 2006 conference \cite{CHSGCD07Impact}. IMotes, i.e., small portable Bluetooth radio devices, were distributed to 78 student workshop participants. The transmission range of each iMote is approximately 30 meters. It is found that user mobility patterns are different in the daytime and nighttime, i.e., users are much more active in the daytime, which fits our intuition. As an example, we use the daytime data during the first day and estimate the contact rate for each device pair, i.e., $\lambda_{i,j}$, by the average number of contact times per second\footnote{Notice that this is a very simple estimator, but it already provides a substantial performance gain with mobility-aware caching. With more data and more sophisticated estimators, the performance can be further improved.}. Based on the estimation, we then apply three different caching placement strategies, i.e., the greedy mobility-aware caching strategy, the random caching strategy and the popular caching strategy. Fig. \ref{fig_scheme6} compares the performance of the three caching strategies during the daytime in the second day. It is shown that the greedy caching strategy outperforms the random caching strategy by $12\% \sim 100\%$ and outperforms the popular caching strategy by $10\% \sim 60\%$, which validates the user mobility model and demonstrates the advantage of the proposed mobility-aware caching strategy.

Besides the conference scenario, we also test the proposed mobility-aware caching strategy in the campus environment, where iMotes are distributed to 36 students in the Cambridge campus \cite{cam-leguay06}. Based on the data during the daytime in previous days, we estimate the contact rate for each device pair, i.e., $\lambda_{i,j}$, and design the caching strategies. The performance of the three caching strategies in the following day is shown in Fig. \ref{fig_scheme7}. We see that the greedy caching strategy outperforms the random caching strategy by $15\% \sim 100\%$ and outperforms the popular caching strategy by $7\% \sim 60\%$.

To show the effect of mobility in a realistic scenario, we generate the user moving traces based on the Self-similar Least-Action Walk (SLAW) model \cite{SLAW}, where $80$ users are moving with the same velocity in a $1000$ m $\times$ $1000$ m area and the transmission range of each user is $30$ m. In the simulation, the hurst parameter for self-similarity of waypoints is set as $0.8$, and the pause-time follows a truncated Pareto distribution with the minimum and maximum values as $30$ s and $30$ min, respectively. We first estimate the contact rate for each device pair, i.e., $\lambda_{i,j}$, and design the caching strategies. The performance of the three strategies on the generated realistic user traces is shown in Fig. \ref{fig_SLAW}. Besides, the numerical value calculated by Eq. (\ref{problem}) is shown to demonstrate that the Poission model provides a good approximation. It also illustrates that the data offloading ratio increases with the user velocity, with either greedy mobility-aware caching or random caching. Meanwhile, the greedy mobility-aware caching strategy outperforms both random and popular caching strategies with different user velocities.

\section{conclusions}
In this paper, we exploited user mobility to improve caching placement in D2D networks using a coded cache protocol. We took advantage of the inter-contact pattern of user mobility when formulating the caching placement problem. To assist the evaluation of the complicated objective function, we proposed a divide and conquer algorithm. A DP algorithm was then developed to find the optimal caching placement, which is much more efficient than exhaustive search. By reformulating it as a monotone submodular maximization problem over a matroid constraint, we developed an effective greedy caching placement algorithm, which achieves a near-optimal performance. Simulation results based on both the mathematical model and real-life data-sets showed that the proposed mobility-aware greedy caching strategy outperforms both the random caching strategy and popular caching strategy. It is observed that slow users tend to cache the most popular files to support themselves, since they have few opportunities to communicate with others via D2D links. Fast moving users also tend to cache the most popular files, but the purpose is mainly to fulfill the requests from other users, since they have much more opportunities to establish D2D links with others. Meanwhile, users with medium velocity tend to cache less popular files to better exploit the D2D communications. For future works, it would be interesting to exploit other information of the user mobility pattern to further improve the caching efficiency. It will also be interesting to investigate joint caching at both BSs and devices, as well as distributed algorithms for implementation in large-size networks.

\section*{Appendix}
\subsection{Proof of Lemma \ref{NP}}

We will prove that problem (\ref{problem}) is NP-hard by a reduction from the $2$-disjoint set covers (2DSC) problem, which has been proved to be NP-hard \cite{cardei2005improving}. 
\begin{defn}
2DSC: Denote $\mathcal{B}$ as a set of finite elements, and $\mathcal{A}$ as a collection of subsets of $\mathcal{B}$. The problem is to determine whether there exist two subsets of $\mathcal{A}$, i.e., $\mathcal{A}_1 \subset \mathcal{A}$ and $\mathcal{A}_2 \subset \mathcal{A}$, such that: 1) $\mathcal{A}_1 \cap \mathcal{A}_2 = \varnothing$; 2) The union of the members in $A_i, i=1,2$ covers all the elements in $\mathcal{B}$, i.e., $\mathcal{B}=\bigcup_{\mathcal{B}' \in \mathcal{A}_i} \mathcal{B}', i=1,2.$
\end{defn}
The 2DSC problem can be solved by solving a special case of problem (\ref{problem}), where $N_f=2$, $K_f=1$, $C=1$, $B_{i,j}=1$ $(i \in \mathcal{D}, j \in \mathcal{D})$, and $N_u=\max(|\mathcal{A}|,|\mathcal{B}|)$. Denote the $i$-th element in $\mathcal{B}$ as $b_j$ and the $j$-th member in $\mathcal{A}$ as $\mathcal{A}^m_j$, and if $b_i \in \mathcal{A}^m_j$, set $\lambda_{i,j}=+\infty$; otherwise, set $\lambda_{i,j}=0$. Meanwhile, set $\lambda_{i,j}=0$, if $i>|\mathcal{B}|$ or $j>|\mathcal{A}|$. Then, the special case of problem (\ref{problem}) becomes

\begin{align}
\mathop {\max }\limits_{X } \quad & \frac{1}{N_u} \sum\limits_{i \in \mathcal{D}} \sum\limits_{f \in \{1,2\}} p_f \left[1-\exp \left(-T^d \sum \limits_{j \in \mathcal{D}}\lambda_{i,j}x_{j,f} \right)\right] \notag \\
	&=\frac{1}{N_u} \sum\limits_{i \in \mathcal{B}} \sum\limits_{f \in \{1,2\}} p_f \mathbbm{1}_{\exists j \in \mathcal{A}, \text{ s.t. } b_i \in \mathcal{A}^m_j \text{ and } x_{j,f}=1} , \label{problem_sim} \\
\text{s.t.} \quad & x_{j,1}+x_{j,2} \leqslant 1 , \forall j \in \mathcal{A}, \tag{\ref{problem_sim}a} 
\end{align}
\begin{align}
& x_{j,f} \in \{0,1\}, \forall j \in \mathcal{A} \text{ and } f \in \{1,2\}, \tag{\ref{problem_sim}b} 
\end{align}
Accordingly, if the solution of problem (\ref{problem_sim}) is $\frac{|\mathcal{B}|}{N_u}$, two disjoint subsets of $\mathcal{A}$ are found, i.e., $\mathcal{A}_1=\{\mathcal{A}^m_j| x_{j,1}=1\}$ and $\mathcal{A}_2=\{\mathcal{A}^m_j| x_{j,2}=1\}$; otherwise, there is no such subsets. Thus, if Problem (\ref{problem}) can be solved in polynomial time, the 2DSC problem can be solved in polynomial time, i.e., the 2DSC problem is poly-time reducible to Problem (\ref{problem}). Since an NP-hard problem is reducible to problem (\ref{problem}), problem (\ref{problem}) is NP-hard.

\subsection{Proof of Lemma 4}
%The function $g(A)$ defined in (\ref{objective}) can be rewritten as
%\begin{align}
%g(Y)=&\frac{1}{N_u}\sum \limits_{i}\sum \limits_{f \in \mathcal{F}} \frac{p_f}{K_f} \left\{ \mathbb{E}\left[ \left( \sum_{j \in \mathcal{D},j \ne i} \min (BM_{i,j},|Y \cap S_{j,f}|) + | Y \cap S_{i,f}|,K_f \right) \right]\right\} \notag \\
%=& \frac{1}{N_u}\sum \limits_{i}\sum \limits_{f \in \mathcal{F}} \frac{p_f}{K_f} \Bigg\{ \sum \limits_{q=0}^{K_f-1} q \Pr \left[  \sum_{j \in \mathcal{D}, j \ne i} \min (B M_{i,j},| Y \cap S_{j,f} |) + | Y \cap S_{i,f} | =q \right] \notag \\
%&+K_f \Pr \left[ \sum_{j \in \mathcal{D}, j \ne i} \min (B M_{i,j},| Y \cap S_{j,f} |) + | Y \cap S_{i,f} | \ge K_f \right] \Bigg\}
%\end{align}
The gain of adding $y_{j,f,k}$ into the caching placement $Y$ is
\begin{align}
&g(Y \cup \{y_{j,f,k}\})-g(Y)= \frac{p_f }{N_u K_f} \times \Bigg\{ \notag \\
&(a)
\begin{cases}
\begin{array}{c}
\sum \limits_{q=0} ^{K_f-1} q \Bigg(\sum \limits_{i \in \mathcal{D}} \Pr \bigg[ \sum_{j' \in \mathcal{D}, j' \ne j}
\min (B_{i,j'}M_{i,j'}, |Y \cap S_{j',f}|) + \min (B_{i,j} M_{i,j}, |Y \cap S_{j,f}|+1) = q \bigg]
\end{array} & \\
-\sum \limits_{i \in \mathcal{D}} \Pr \bigg[ \sum \limits_{j' \in \mathcal{D}, j' \ne j}
\min (B_{i,j'}M_{i,j'}, |Y \cap S_{j',f}|) + \min (B_{i,j} M_{i,j}, |Y \cap S_{j,f}|) = q \bigg] \Bigg) &
\end{cases} \notag \\
&(b)
\begin{cases}
\begin{array}{c}
+ K_f \Bigg( \sum \limits_{i \in \mathcal{D}} \Pr \bigg[ \sum_{j' \in \mathcal{D},j' \ne j}
\min (B_{i,j'}M_{i,j'}, |Y \cap S_{j',f}|) + \min (B_{i,j} M_{i,j}, |Y \cap S_{j,f}|+1) \ge K_f \bigg]
\end{array} & \\
-\sum \limits_{i \in \mathcal{D}} \Pr \bigg[ \sum \limits_{j' \in \mathcal{D}, j' \ne j}
\min (B_{i,j'}M_{i,j'}, |Y \cap S_{j',f}|) + \min (B_{i,j} M_{i,j}, |Y \cap S_{j,f}|) \ge K_f \bigg] \Bigg) \Bigg\} &
\end{cases} \label{gain_origin}
\end{align}
We first divide the Eq. (\ref{gain_origin}) into two parts, and tackle them separately. The first part is
\begin{align}
(a)
= & \sum \limits_{q=0} ^{K_f-1} q \Bigg(\sum \limits_{i \in \mathcal{D}}  \sum \limits_{q'=|Y \cap S_{j,f}|}^{|Y \cap S_{j,f}|+1} \Pr \bigg[ \sum_{j' \in \mathcal{D}, j' \ne j}
\min (B_{i,j'}M_{i,j'}, |Y \cap S_{j',f}|) = q-q' \bigg] \notag \\
& \times \left( \Pr \left[ \min (B_{i,j} M_{i,j}, |Y \cap S_{j,f}|+1) = q' \right]
- \Pr \left[ \min (B_{i,j} M_{i,j}, |Y \cap S_{j,f}|) = q' \right] \right) \Bigg).
\end{align}
When $ B_{i,j} \left( \left\lceil \frac{|Y \cap S_{j,k}|+1}{B_{i,j}}\right\rceil -1 \right)=|Y \cap S_{j,f}|$,
\begin{align}
(a)= & \sum \limits_{q=0} ^{K_f-1} q \sum \limits_{i \in \mathcal{D}} \Bigg( \Pr \bigg[ \sum_{j' \in \mathcal{D}, j' \ne j}
\min (B_{i,j'}M_{i,j'}, |Y \cap S_{j',f}|) = q-|Y \cap S_{j,f}| \bigg] \notag \\
& \times \left( \Pr \left[ M_{i,j} = \left\lceil \frac{|A \cap S_{j,k}|+1}{B_{i,j}}\right\rceil-1 \right] - \Pr \left[ M_{i,j} \ge \left\lceil \frac{|A \cap S_{j,k}|+1}{B_{i,j}}\right\rceil-1 \right] \right) \notag 
\end{align}
\begin{align}
& + \Pr \bigg[ \sum_{j' \in \mathcal{D}, j' \ne j}
\min (BM_{i,j'}, |Y \cap S_{j',f}|) = q-|Y \cap S_{j,f}|-1 \bigg] \notag \\
& \times \Pr \left[ M_{i,j} \ge \left\lceil \frac{|A \cap S_{j,k}|+1}{B_{i,j}}\right\rceil \right]  \Bigg) \notag \\
= & \sum \limits_{q=0} ^{K_f-1} q \sum \limits_{i \in \mathcal{D}}
 \Pr \left[ M_{i,j} \ge \left\lceil \frac{|A \cap S_{j,k}|+1}{B_{i,j}}\right\rceil \right]  \notag \\
 & \times \Bigg( \Pr \bigg[ \sum_{j' \in \mathcal{D}, j' \ne j}
\min (B_{i,j'}M_{i,j'}, |Y \cap S_{j',f}|) = q-|Y \cap S_{j,f}|-1 \bigg] \notag \\
&-\Pr \bigg[ \sum_{j' \in \mathcal{D}, j' \ne j}
\min (B_{i,j'}M_{i,j'}, |Y \cap S_{j',f}|) = q-|Y \cap S_{j,f}| \bigg] \Bigg).
\end{align}
When $ B_{i,j} \left( \left\lceil \frac{|Y \cap S_{j,k}|+1}{B_{i,j}}\right\rceil -1 \right)<|Y \cap S_{j,f}|$,
\begin{align}
(a)= & \sum \limits_{q=0} ^{K_f-1} q \sum \limits_{i \in \mathcal{D}} \Bigg( \Pr \bigg[ \sum_{j' \in \mathcal{D}, j' \ne j}
\min (B_{i,j'}M_{i,j'}, |Y \cap S_{j',f}|) = q-|Y \cap S_{j,f}| \bigg] \notag \\
& \times \left( 0 - \Pr \left[ M_{i,j} \ge \left\lceil \frac{|A \cap S_{j,k}|+1}{B_{i,j}}\right\rceil \right] \right) \notag \\
& + \Pr \bigg[ \sum_{j' \in \mathcal{D}, j' \ne j}
\min (B_{i,j'}M_{i,j'}, |Y \cap S_{j',f}|)  = q-|Y \cap S_{j,f}|-1 \bigg] \notag \\
& \times \Pr \left[ M_{i,j} \ge \left\lceil \frac{|A \cap S_{j,k}|+1}{B_{i,j}}\right\rceil \right]  \Bigg) \notag
\end{align}
\begin{align}
= & \sum \limits_{q=0} ^{K_f-1} q \sum \limits_{i \in \mathcal{D}}
 \Pr \left[ M_{i,j} \ge \left\lceil \frac{|A \cap S_{j,k}|+1}{B_{i,j}}\right\rceil \right]  \notag \\
 & \times \Bigg( \Pr \bigg[ \sum_{j' \in \mathcal{D}, j' \ne j}
\min (B_{i,j'}M_{i,j'}, |Y \cap S_{j',f}|) = q-|Y \cap S_{j,f}|-1 \bigg] \notag \\
&-\Pr \bigg[ \sum_{j' \in \mathcal{D}, j' \ne j}
\min (B_{i,j'}M_{i,j'}, |Y \cap S_{j',f}|)  = q-|Y \cap S_{j,f}| \bigg] \Bigg).
\end{align}
Thus, we can get
\begin{align}
(a)= & \sum \limits_{i \in \mathcal{D}}
 \Pr \left[ M_{i,j} \ge \left\lceil \frac{|A \cap S_{j,k}|+1}{B_{i,j}}\right\rceil \right]  \notag \\
& \times \Bigg( \Pr \bigg[ \sum_{j' \in \mathcal{D}, j' \ne j}
\min (B_{i,j'}M_{i,j'}, |Y \cap S_{j',f}|)\le K_f-|Y \cap S_{j,f}|-1 \bigg] \notag 
\end{align}
\begin{align}
&-K_f \Pr \bigg[ \sum_{j' \in \mathcal{D}, j' \ne j}
\min (B_{i,j'}M_{i,j'}, |Y \cap S_{j',f}|)= K_f-|Y \cap S_{j,f}|-1 \bigg] \Bigg).
\end{align}
Similarly, we can get
\begin{align}
(b)
= & \sum \limits_{i \in \mathcal{D}}
 \Pr \left[ M_{i,j} \ge \left\lceil \frac{|A \cap S_{j,k}|+1}{B_{i,j}}\right\rceil \right]  \notag \\
 & \times K_f \Pr \bigg[ \sum_{j' \in \mathcal{D}, j' \ne j}
\min (B_{i,j'}M_{i,j'}, |Y \cap S_{j',f}|) = K_f-|Y \cap S_{j,f}|-1 \bigg],
\end{align}
By combining the four parts, the gain of adding $y_{j,f,k}$ into the caching placement $Y$ can be given as shown in Lemma 3.

\subsection{Proof of Lemma 5}

To prove that the function $g(A)$ defined in (\ref{objective}) is a monotone submodular function, we will show that $g(A)$ satisfies Proposition 1. Let $ A \subset D \subset S$, $E=D-A$ and $y_{j,f,k} \in S-D$, then according to Eq. (\ref{gain}), we can get
\begin{align}
g(A \cup \{y_{j,f,k}\})&-g(A)=\frac{p_f }{N_u K_f}  \sum \limits_{i \in \mathcal{D}} \Pr \bigg[ M_{i,j} \geqslant \left\lceil \frac{|D \cap S_{j,f}-E|+1}{B_{i,j}}\right\rceil \bigg] \notag \\
& \times \Pr \bigg[\sum_{j' \in \mathcal{D}, j' \ne j} \min (B_{i,j'}M_{i,j'}, |D \cap S_{j',f}|-E)   \le K_f- |D \cap S_{j,f}-E|-1 \bigg]
\end{align}
Since $\min (B_{i,j'} M_{i,j'}, |D \cap S_{j',f}-E|) \leqslant \min (B_{i,j'} M_{i,j'}, |D \cap S_{j',f}|)$, denote $U_{i,j'}= \min (B_{i,j'} M_{i,j'}, |D \cap S_{j',f}|) - \min (B_{i,j'} M_{i,j'}, |D \cap S_{j',f}-E|) $, then $U_{i,j'} \geqslant 0$ and
\begin{align}
g(A & \cup \{y_{j,f,k}\})-g(A) \notag \\
=&
\frac{p_f }{N_u K_f}  \sum \limits_{i \in \mathcal{D}} \Pr \bigg[ M_{i,j} \geqslant \left\lceil \frac{|D \cap S_{j,f}|-|E \cap S_{j,f}|+1}{B_{i,j}}\right\rceil \bigg] \notag \\
& \times \Pr \bigg[\sum_{j' \in \mathcal{D}, j' \ne j} \min (B_{i,j'}M_{i,j'}, |D \cap S_{j',f}|)  \le K_f- |D \cap S_{j,f}|-1 +|E \cap S_{j,f}|+\sum_{j' \in \mathcal{D}, j' \ne i,j} U_{i,j'} \bigg] \notag 
\end{align}
\begin{align}
\geqslant&
\frac{p_f }{N_u K_f}  \sum \limits_{i \in \mathcal{D}} \Pr \bigg[ M_{i,j} \geqslant \left\lceil \frac{|D \cap S_{j,f}|+1}{B_{i,j}}\right\rceil \bigg] \notag \\
& \times \Pr \bigg[\sum_{j' \in \mathcal{D}, j' \ne j} \min (B_{i,j'}M_{i,j'}, |D \cap S_{j',f}|)  \le K_f- |D \cap S_{j,f}|-1 \bigg] \notag \\
= & g(D \cup \{y_{j,f,k}\})-g(D).
\end{align}
It is easy to get the conclusion that $g(A \cap \{y_{j,f,k}\})-g(A) \geqslant 0 $, since it is a summation of non-negative probabilities. In conclusion, the function $g(A)$ satisfies Proposition 1, and thus, it is a monotone submodular function.

\bibliographystyle{IEEEtran}
\bibliography{IEEEabrv,report}
\end{document}